\newtheorem{theorem}{Theorem}[section]
\newtheorem{lemma}[theorem]{Lemma}
\newtheorem{proposition}[theorem]{Proposition}
\newtheorem{definition}[theorem]{Definition}
\newtheorem{remark}[theorem]{Remark}
\newcommand{\norm}[1]{\left\Vert#1\right\Vert}
\numberwithin{equation}{section}
\def\dive{\mathrm{div}}
\def\cal{\mathcal}
\def\curl{\mathrm{Curl}}
\def\geq{\geqslant} 
\def\leq{\leqslant}
\def\C+{C_+([t_0,\infty))}
\begin{document}\mbox{}
\vspace{0.25in}

\begin{center}

{\huge{\bf The simplified Bardina equation on two-dimensional closed manifolds}}

\vspace{0.25in}

\large{{\bf PHAM Truong Xuan}\footnote{Faculty of Information Technology, Department of Mathematics, Thuyloi university, Khoa Cong nghe Thong tin, Bo mon Toan, Dai hoc Thuy loi, 175 Tay Son, Dong Da, Ha Noi, Viet Nam. 

\noindent
Email: xuanpt@tlu.edu.vn or phamtruongxuan.k5@gmail.com}}

\end{center}

\begin{abstract} 
In this paper we study the viscous simplified Bardina equation on the two-dimensional closed manifold $M$ which is embedded in $\mathbb{R}^3$. First, we prove the existence and the uniqueness of the weak solutions and also the existence of the global attractor for the equation on $M$. Then we establish the upper and lower bounds of the Hausdorff and fractal dimensions of the attractor. We also prove the existence of an inertial manifold for the equation on the two-dimensional sphere ${S}^2$.
\end{abstract}

{\bf Keywords.} Simplified Bardina equation, $2$-dimensional closed manifold, $2$-dimensional sphere, square torus, global attractor, Haussdorff (fractal) dimension, inertial manifold.

{\bf 2010 Mathematics subject classification.} Primary 35Q30, 76D03, 76F20; Secondary 58A14, 58D17, 58D25, 58D30.

%\maketitle

\tableofcontents

\section{Introduction}
Since the existing mathematical theory is not sufficient to prove the global well-posedness of the $3$D Navier-Stokes equations (NSE), the dynamics of homogeneous incompressible fluid flows are not known so far. The mathematicians study these dynamics by using the direct numerical simulation of NSE and consider the mean characteristics of the flow by averaging techniques in many practical applications (see for example \cite{Mars2000,Mars2001,Mars2003}). This leads to the well-known closure problem i.e the following Reynolds averaged NSE is not closed (see \cite{CaLuTi}).   
\begin{eqnarray}\label{RANS}
\bar{v}_t - \nu \Delta \bar{v} + \nabla \cdot (\overline{v\otimes v}) &=& - \nabla \bar{p} + \bar{f}, \nonumber\\
\nabla \cdot \bar{v} &=& 0. 
\end{eqnarray}
Here we can write
\begin{equation*}
\nabla \cdot (\overline{v\otimes v}) = \nabla \cdot (\bar{v}\otimes \bar{v}) + \nabla \cdot \mathcal{R}(v,v),\\
\end{equation*}
with $\mathcal{R}(v,v) = \overline{v\otimes v} - \bar{v}\otimes \bar{v}$ is the Reynolds stress tensor.
However, on the turbulence modeling applications, one need to produce simplified, reliable and computationally realizable closure models. For this reason, in order to obtain the closure models Bardina et al. \cite{BaFe} modified the Reynolds stress tensor by
\begin{equation*}
\mathcal{R}(v,v) \simeq \overline{\bar{v}\otimes \bar{v}} - \bar{\bar{v}} \otimes \bar{\bar{v}}.
\end{equation*}
After that, Layton and Lewandowski \cite{LaLe} considered a simpler form of the above approximation of the Reynolds stress tensor
\begin{equation*}
\mathcal{R}(v,v) \simeq \overline{\bar{v}\otimes \bar{v}} - \bar{v} \otimes \bar{v}.
\end{equation*}
The modification of Layton and Lewandowski leads to study the following sub-grid scale turbulence model
(or called simplified Bardina equation)
\begin{eqnarray}\label{MRANS}
{\omega}_t - \nu \Delta \omega + \nabla \cdot (\overline{\omega\otimes \omega}) &=& - \nabla q + \bar{f}, \cr
\nabla \cdot \omega &=& 0, \cr
\omega(x,0) &=& \bar{v}_0(x), 
\end{eqnarray}
where $(\omega,q)$ is an approximation of $(\bar{v},\bar{p})$. Following \cite{LaLe}, the simplified Bardina equation is considered with the filtering kernel associated with the Helmholtz operator $(I-\alpha^2\Delta)^{-1}$. This means that if $v$ is the unfiltered velocity and $u=\omega$ is the smooth filtered velocity then $v=u-\alpha^2\Delta u$ and also keep that $p=q-\alpha^2\Delta q$, then the equation\eqref{MRANS} becomes
\begin{eqnarray}\label{SBRANS}
v_t - \nu \Delta v + (u \cdot \nabla)u &=& - \nabla p + f, \cr
\nabla u &=& \nabla v= 0, \cr
v &=& u - \alpha^2\Delta u \cr
u(x,0) &=& u^{in}(x), 
\end{eqnarray}
where $u$ and $v$ are periodic with periodic box $\Omega = [0,2\pi L]^3$.

The global existence and uniqueness of weak solutions of the equation \eqref{SBRANS} with the periodic boundary conditions in three-dimension is established early by Layton and Lewandowski \cite{LaLe} and then expanded to study by Titi et al \cite{CaLuTi}. In detail, the last work has proven the global well-posedness for weaker initial conditions than the first work, then considered the upper bound to the dimension of the global attractor and given the relation between the modified Bardina equation and the modified Euler equation. The existence of inertial manifold for the simplified Bardina equation is studied by Titi et al. in \cite{Titi2014} in the two-dimension with periodic boundary condition case. On the other hand, there are many works about the other turbulence models such as the modified-Leray-$\alpha$ and viscous Camassa-Holm or Navier-Stokes-$\alpha$ on the same framework, see for example \cite{Che,CheHoOlTi,Fo,Ily2004,Il2004',IlLuTi}.

The Navier-Stokes equation and the turbulence equations are studied on the generalized compact Riemannian manifolds in the works of Ebin and Marsden \cite{EbiMa}, Skholler \cite{Shko1998,Shko2000} and Skholler et al. \cite{Mars2000} via the geometry and the analysis of group of diffeomorphisms. In the specific compact manifolds such as two-dimensional sphere and square torus, the Navier-Stokes equation was studied in the works of Ilyin \cite{Il1990,Il1992,Il1988,Il1994,Il1999} and developed recently by Ilyin, Laptev and Zelik \cite{Ily2018,Ily2019,Ily2020}. In these works, they proved the well-posedness of the weak solution, then estimated the upper bound of the Hausdorff and fractal dimensions of the global attractor. For the turbulence equations Ilyin and Titi studied the attractor of the modified-Leray-$\alpha$ equation on the two-dimensional sphere and the square torus \cite{Il2004'}. In detail, they established the upper and lower bounds dependeding on $\alpha$ of the Hausdorff and fractal dimensions of the attractor. The method is based on the vorticity scalar form (see also \cite{Il1999} for the Navier-Stokes equation) of the model and the theorem about the relation between the Lyapunov exponents and the Hausdorff (fractal) dimension of attractor (see \cite{Il2001,Il2004,Te1988}). Another important technique is used to estimate the attractor's dimensions that is the Lieb-Sobolev-Thirring inequality (see \cite{Te1984}). It plays an important role to estimate the Lyapunov exponents. The Sobolev-Lieb-Thirring inequality on manifolds is considered initially by Teman et al. in \cite{Te88}, then it is improved by Ilyin et al. in the recent works on the sphere and torus \cite{Ily2019,Ily2020}. Furthermore, by considering the Navier-Stokes equation on the domain of sphere, Ilyin and Laptev \cite{Ily2018} improved the Berezin-Li-Yau inequality on the lower of the sum of the eigenvalues and therefore obtain the upper of the dimension of attractor.  

In the present paper we study the simplified Bardina equation \eqref{SBRANS} on a two-dimensional closed manifold. More precisely we study the existence and uniqueness of the weak solutions, estimate of the Hausdorff and fractal dimension of attractor and the existence of the inertial manifold. Since on a two-dimensional closed manifold there is Kodaira-Hodge decomposition of the space of smooth vector fields with the appearance of the harmonic functions, we need to add some dissipative term to the original equation to obtain a dissipative system (see Section 2.2). Then the global well-posedness will be done by the Garlekin approximation scheme and note to control the norms of the harmonic functions. We develop the methods in \cite{Il1999,Il2001,Il2004} to establish the upper bound of the Hausdorff and fractal dimensions of the global attractor for Equation \eqref{SBRANS}. Then we will develop the methods in \cite{Liu,Il2004',Il2021} to find the lower bound of the attractor's dimensions on the two-dimensional torus. In particular, we  construct a family of stationary solutions arising from the family of Kolmogorov flows and establish the lower bound for the dimension of the unstable manifold around these stationary solutions. As a consequence we obtain the lower bound of the global attractor's dimensions.
The existence of the inertial manifold is proven by feature of the spectral of Laplacian operator on two-dimensional sphere $S^2$ and the estimates of the nonlinear parts via the appearance of the alpha parameter.
  
This paper is organized as follows:
Section 2 gives the setting of the simplified Bardina equations on the generalized two-dimensional closed manifolds $M$.
Section 3 we establish the global well-posedness of the simplified Bardina equation on $M$. Section 4 we establish the upper bound of the Hausdorff and fractal dimensions of the global attractor for the equation on $S^2$ then on $M$, then we establish the lower bound of the attractor's dimensions for the equation on the two-dimensional torus $T^2=[0; \, 2\pi]\times[0; \, 2\pi]$. In Section 5 we prove the existence of an inertial manifold for the equation on $S^2$.

\subsection*{Acknowledgment}
The author would like to thank Prof. Ilyin (Keldysh Institute of Applied mathematics, Russian Academy of Sicences) for many helpful discussions.

\section{Geometrical and analytical setting}
\subsection{Two-dimensional closed manifolds and functional spaces}
Let $M$ be a $2$-dimensional closed manifold embedded in $\mathbb{R}^3$. We denote by $TM$ the set of tangent vector fields on $M$ and by $(TM)^{\bot}$ the set of normal vector fields. Following \cite{Il1990,Il1988,Il1994}, we define the two operators
$$\curl_n: TM \rightarrow (TM)^{\bot} \, \mbox{and} \, \curl:(TM)^{\bot}\rightarrow TM$$
in a neighbourhood of $M$ in $\mathbb{R}^3$:
\begin{definition}
Let $u$ be a smooth vector field on $M$ with values in $TM$, and let $\vec{\psi}$ be a smooth vector field on $M$ with values in $(TM)^\bot$, i.e. $\vec{\psi} = \psi\vec{n}$, where $\vec{n}$ is the outward unit normal vector to $M$ and $\psi$ is a smooth scalar function. We then identify the vector field $\vec{\psi}$ with the scalar function $\psi$. Let $\hat u$ and $\hat\psi$ be smooth extensions of $u$ and $\psi$ into a neighbourhood of $M$ in $\mathbb{R}^3$ such that $\hat{u}|_{M}=u$ and $\hat{\psi}|_{M}=\psi$. For $x \in M$ and $y\in \mathbb{R}^3$, we define
$$\mathrm{Curl}_nu(x) = (\curl\hat{u}(y)\cdot \vec{n}(y))\vec{n}(y)|_{y=x},$$
$$\mathrm{Curl}\vec{\psi}(x) = \curl\psi(x)= \curl\hat{\psi}(y)|_{y=x},$$
where the operator $\curl$ that appears on the right hand sides is the classical $\curl$ operator in $\mathbb{R}^3$.
\end{definition} 
The above definitions of $\curl_nu$ and $\curl\psi$ are independent of the choice of the neighbourhood of $M$ in $\mathbb{R}^3$. Moreover, the following formulas hold
\begin{equation}\label{equal1}
\curl_nu = -\vec{n}\dive(\vec{n}\times u), \, \curl\psi= -\vec{n}\times \nabla \psi,
\end{equation}
\begin{equation}\label{equal2}
\nabla_uu = \nabla\frac{|u|^2}{2} - u \times \curl_nu,
\end{equation}
\begin{equation}\label{equal3}
\Delta u = \nabla \dive u - \curl\curl_n u,
\end{equation}
where $\times$ is the outer vector product in $\mathbb{R}^3$, $\nabla$ is the covariant derivative along the vector fields and $\Delta=d\delta+\delta d$ is the Hodge-Laplacian operator.

Let $L^p(M)$ and $L^p(TM)$ be the $L^p$-spaces of the scalar functions and the tangent vector fields on $M$ respectively. Let $H^p(M)$ and $H^p(TM)$ be the corresponding Sobolev spaces of scalar functions and vector fields. The inner product on $L^2(M)$ and $L^2(TM)$ are given by
$$\left<u,v\right>_{L^2(M)} = \int_{M}u\bar{v}dM, \, \mbox{for} \, u,v \in L^2(M),$$
$$\left<u,v\right>_{L^2(TM)} = \int_{M} u \cdot \bar{v} dM, \, \mbox{for} \, u,v \in L^2(TM).$$
The following integration by parts formulas will be used frequently
$$\left<\nabla h,v\right>_{L^2(TM)}=-\left<h,\dive v\right>_{L^2(M)},$$
$$\left<\curl\vec{\psi},v\right>_{L^2(TM)} = \left<\vec{\psi},\curl_nv\right>_{L^2(M)}.$$
By using Kodaira-Hodge decomposition we have
$$C^\infty(TM) = \left\{\nabla \psi \, : \, \psi \in C^\infty(M) \right\} \oplus \left\{ \mathrm{Curl}\psi \, : \, \psi \in C^\infty(M) \right\} \oplus \mathcal{H}^1,$$
where $\mathcal{H}^1$ is the finite-dimensional space of harmonic $1-$forms. Putting 
$$\mathcal{V} = \left\{ \curl\psi \, : \, \psi \in C^\infty(M) \right\} \, , \, H = \overline{\mathcal{V}}^{L^2(TM)} \, , \, V = \overline{\mathcal{V}}^{H^1(TM)},$$
endowed with the norms
$$\norm{u}^2_H = \left<u,u\right>, \, \norm{u}^2_V = \left<Au,u\right> = \left<\curl_nu, \curl_nu \right>.$$
Since $\dive u = 0$, we have the Poincar\'e inequality
\begin{equation}\label{norm1}
\norm{u}_H \leq \lambda_1^{-1/2} \left( \norm{u}_V + \norm{\dive u}_H \right) = \lambda_1^{-1/2}\norm{u}_V\end{equation}
where $\lambda_1$ is the first eigenvalue of the Stokes operator $A=\curl\curl_n$ (see the below proposition). We know that
\begin{equation}\label{norm2}
\norm{u}_{H^1(TM)} = \norm{u}^2_{L^2(TM)} + \norm{\dive u}^2_{L^2(M)} + \norm{\curl_n u}^2_{L^2(M)}.
\end{equation}
From the inequalities \eqref{norm1}, \eqref{norm2} and since $\dive u=0$ on $V$, the norms on $H^1$ and $V$ are equivalent for all  $u\in V$. In the rest of this paper, we denote $\norm{.}_{L^2}:=|.|$, $\norm{.}_{V}:=\norm{.}$ and $\norm{.}_{H^1}:=\norm{.}_1$.

Let $\mathbb{P}: L^2(TM)\rightarrow H$ be the orthogonal projection i.e Helmholtz-Leray projection on $H$, and let $A=-\mathbb{P}\Delta = -\Delta \mathbb{P} = \curl\curl_n$ be the Stokes operator with domain $D(A)=H^2(TM)\cap V$. Considering the linear Stokes problem
\begin{equation}\label{StoEq}
A u + \mathrm{grad} p =f, \, \dive u=0.
\end{equation}
Taking the inner product of this equation with $v\in V$ we get
\begin{equation*}
\left< \curl_n u, \curl_n u \right> = \left< f,u\right> \Leftrightarrow \left\| u\right\|_V = \left< f,u\right>.
\end{equation*}
By Lax-Milgram theorem, for each $f\in H^{-1}(TM)$ the weak solution of \eqref{StoEq} exists and in unique. Hence $A: H^1(TM) \longrightarrow H^{-1}(TM)$ is a linear operator with compact inverse. 
As a direct consequence, we find that problem \eqref{StoEq} has an orthonormal smooth eigenfunctions $\omega_i$
(dense in $H$ and $V$) i.e
\begin{equation*}
\curl\curl_n \omega_i = \lambda_i\omega_i, \, \dive \omega_i=0.
\end{equation*}
The relations between the eigenfunctions $\omega_i$ and the ones $\psi_i$ of the scalar Laplacian $\Delta=\dive\mathrm{grad}$ on $M$ are
\begin{equation*}
-\Delta\psi_i = \lambda_i\psi_i, \, \omega_i = n\times \mathrm{grad}\psi_i = -\curl \psi_i.
\end{equation*}
We summarize the properties of the Stokes operator $A$ in the following proposition
\begin{proposition}
The operator $A=\curl\curl_n$ is unbounded, positive, self-adjoint, symmetric in $H$ with eigenvalues $0<\lambda_1\leq\lambda_2\leq...$ which is only accumulation point $+\infty$. Moreover, its eigenvalues correspond to an orthonormal basis in $H$ (which is also orthogonal in $V$).
\end{proposition}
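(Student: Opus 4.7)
The plan is to reduce the spectral claim to the Hilbert--Schmidt spectral theorem applied to the inverse $A^{-1}$ that has already been constructed via Lax--Milgram in the paragraph preceding the proposition. First I would check that $A$, regarded as an unbounded operator on $H$ with domain $D(A) = H^2(TM) \cap V$, is symmetric and strictly positive. The integration by parts formula $\left<\curl\vec\psi, v\right> = \left<\vec\psi, \curl_n v\right>$ yields
\begin{equation*}
\left<Au, v\right> = \left<\curl_n u, \curl_n v\right> \quad \text{for all } u \in D(A),\ v \in V,
\end{equation*}
which is manifestly symmetric in $u$ and $v$; setting $v=u$ and combining with the Poincar\'e inequality \eqref{norm1} gives $\left<Au, u\right> \geq \lambda_1 |u|^2$, so $A$ is strictly positive. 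Unboundedness is clear since $D(A)$ is strictly smaller than $H$ and $A$ is a second order differential operator.

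Next I would upgrade the inverse $A^{-1}: H^{-1}(TM) \to V$ into a compact self-adjoint operator on $H$. Restricting to data $f \in H \hookrightarrow H^{-1}(TM)$ and composing with the embedding $V \hookrightarrow H$, the Rellich--Kondrachov theorem on the closed manifold $M$ guarantees that $A^{-1}:H \to H$ is compact. The symmetry established above transfers to self-adjointness of $A^{-1}$ on $H$: for $f,g \in H$ with $u := A^{-1}f$ and $v := A^{-1}g$,
\begin{equation*}
\left<A^{-1}f, g\right> = \left<u, Av\right> = \left<Au, v\right> = \left<f, A^{-1}g\right>.
\end{equation*}

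Then I would apply the spectral theorem for compact, self-adjoint, positive operators to $A^{-1}$, obtaining an orthonormal basis $\{\omega_i\}$ of $H$ with $A^{-1}\omega_i = \mu_i \omega_i$ and $\mu_i \downarrow 0$. Setting $\lambda_i := \mu_i^{-1}$ produces the required sequence $0 < \lambda_1 \leq \lambda_2 \leq \cdots$ whose only accumulation point is $+\infty$, with $A \omega_i = \lambda_i \omega_i$; orthogonality of the $\omega_i$ in $V$ is then automatic, since for $i \neq j$
\begin{equation*}
\left<\omega_i, \omega_j\right>_V = \left<A\omega_i, \omega_j\right> = \lambda_i \left<\omega_i, \omega_j\right> = 0,
\end{equation*}
and self-adjointness of the unbounded operator $A$ on $H$ follows from the standard equivalence between self-adjointness with compact resolvent and compactness plus self-adjointness of the inverse. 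The only conceptual obstacle is the compactness step, which rests on the Rellich--Kondrachov embedding on the closed Riemannian manifold $M$ rather than on a flat domain; once that is in hand, everything reduces to a standard abstract argument, and smoothness of the $\omega_i$ (implicit in the correspondence $\omega_i = -\curl \psi_i$ stated after \eqref{StoEq}) is obtained by bootstrap elliptic regularity applied to $-\Delta \psi_i = \lambda_i \psi_i$.
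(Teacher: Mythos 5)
Your proposal is correct and follows essentially the same route as the paper, which establishes the proposition by the preceding discussion: Lax--Milgram for the Stokes problem giving a compact inverse $A^{-1}$ on $H$ (via the Rellich embedding), followed by the spectral theorem for compact positive self-adjoint operators and the identity $\left<\omega_i,\omega_j\right>_V=\left<A\omega_i,\omega_j\right>=\lambda_i\left<\omega_i,\omega_j\right>$ for orthogonality in $V$. The only cosmetic point is that invoking the Poincar\'e inequality with constant $\lambda_1$ to prove strict positivity is mildly circular, since $\lambda_1$ is defined as the first eigenvalue; coercivity of $\left<\curl_n u,\curl_n u\right>$ on $V$ (already used for Lax--Milgram) suffices at that stage.
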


\subsection{The simplified Bardina equations}
In 1980 Bardina et al. \cite{BaFe} introduced a particular sub-grid scalar model which was
later simplified by Layton and Lewandowski \cite{LaLe} (therefore, we call this system by simplified Bardina equation):
\begin{align}\label{BarEqu}
\begin{cases}
v_t - \nu \Delta v + (u \cdot \nabla) u + \nabla p = f,\\
\nabla \cdot v= \nabla \cdot u=0,\\
v= u - \alpha^2 \Delta u,\\
u(0)=u_0,
\end{cases}
\end{align}
where the unknowns are the fluid velocity vector field $v$, the ''filtered'' velocity vector field $u$ and the ''filtered'' pressure scalar $p$. Besides, the constant $\nu>0$ is the kinematic viscosity coefficient and $f$ is the body force assumed to be time independent. 
On a $2$-dimension closed manifold $M$ with $u_0 \in V \oplus \mathcal{H}^1$ and $f \in H \oplus \mathcal{H}^1$, by the equalities \eqref{equal1}, \eqref{equal2} and \eqref{equal3} the simplified Bardina equation can be written as
\begin{align}\label{BarEqu1}
\begin{cases}
v_t + \nu \curl\curl_n v + \mathrm{grad}\dfrac{u^2}{2} - u\times \curl_n u  + \nabla p = f,\\
\nabla \cdot v= \nabla \cdot u =0,\\
v= u - \alpha^2 \Delta u,\\
u(0)=u_0.
\end{cases}
\end{align}
Recall that $\mathbb{P}$ is an orthogonal projection on $H$ namely Helmholtz projection. Denote by $\mathbb{Q}$ the projection of $L^2(M)$ on the space of harmonic forms $\mathcal{H}^1$. Putting
$$f=f_1+f_2, \, u(t)=u_1(t)+u_2(t), \, f_1,u_1(t)\in V, \, f_2,u_2(t)\in \mathcal{H}^1,$$
$$u_0=u_{10}+u_{20},\, u_{10}=\mathbb{P}(u_0)\in V, \, u_{20}=\mathbb{Q}(u_0)\in \mathcal{H}^1.$$
By applying the projection $\mathbb{P}+\mathbb{Q}$ on the simplified Bardina equation \eqref{BarEqu1}, we get
\begin{equation}\label{eq1}
\frac{d}{dt}(u_1+\alpha^2Au_1) + \nu Au_1 + \mathbb{P}(\curl_n u_1\times u_1 + \curl_n u_1\times u_2) = f_1,
\end{equation}
\begin{equation}\label{eq2}
\frac{d}{dt}u_2 + \mathbb{Q}(\curl_n u_1\times u_2) = f_2
\end{equation}
In order to Equations \eqref{eq1} and \eqref{eq2} become dissipative, some dissipative term must be added to these equations for example $\sigma u$. Therefore, we obtain
\begin{equation}\label{eq1'}
\frac{d}{dt}(u_1+\alpha^2Au_1) + \nu A(u_1 + \alpha^2Au_1) + \mathbb{P}(\curl_n u_1\times u_1 + \curl_n u_1\times u_2) + \sigma u_1 = f_1,
\end{equation}
\begin{equation}\label{eq2'}
\frac{d}{dt}u_2 + \mathbb{Q}(\curl_n u_1\times u_2) + \sigma u_2 = f_2.
\end{equation} 
These equations can be expressed in the simple form as
\begin{equation}\label{eq3}
\frac{d}{dt} (u+\alpha^2 Au) + \nu A(u + \alpha^2Au) + \mathbb{B}(u,u) + \sigma u = f,
\end{equation}
or 
$$\frac{d}{dt}v + \nu Av + \mathbb{B}(u,u) + \sigma u = f,$$
where
$$\mathbb{B}(u,u) = (\mathbb{P}+\mathbb{Q})(\curl_nu \times u).$$
\begin{definition}
Let $f \in H\oplus \cal{H}^1$ and $u_0\in V \oplus \cal{H}^1$ and $T>0$. A weak solution of Equation \eqref{eq3} is $u = u_1 + u_2: \, u_1 \in L^2([0,T],D(A))\cap C([0,T],V)$ and $u_2 \in C^1([0,T], \cal{H}^1)$  with $\dfrac{du_1}{dt}\in L^2([0,T],H)$ and such that for each $\omega = \omega_1 + \omega_2: \, \omega_1\in D(A)$ and $\omega_2 \in \cal{H}^1$: 
\begin{equation}\label{WeakSol}
\partial_t \left<v,w\right> + \nu \left<\curl_n v, \curl_n w\right> + b(u,u,\omega) + \sigma \left< u,\omega\right> = \left<f,\omega \right>,
\end{equation}
where $b(u,u,\omega)= \int_M \left<\mathbb{B}(u,u),\omega \right> dM = \int_{M} \curl_n u \times u \cdot \omega dM$. Equation \eqref{WeakSol} can be understood in the sense that for $t_0,\, t\in [0,T]$, we have the intergral equation
\begin{eqnarray*}
\left< v(t),\omega \right> - \left< v(t_0),\omega \right> + \nu\int_{t_0}^t\left< v(s),A\omega\right> ds 
&+& \int_{t_0}^t\left< \mathbb{B}(u(s),u(s)), \omega\right>ds \cr
&+&  \sigma \int_{t_0}^t\left< u(s), \omega\right>ds= \int_{t_0}^t\left< f,\omega\right>ds.
\end{eqnarray*}
\end{definition}
The bilinear operator $b(u,u,\omega)$ is generalized by trilinear form $b(u,v,\omega)$ on $H^1(TM)^3$ in the following formula
\begin{eqnarray*}
b(u,v,\omega) &=& \int_{M}\nabla_uv \cdot \omega dM = \int_M u^k\nabla_k v^i \omega^j g_{ij}dM \cr
&=& \frac{1}{2} \int_M \left( -u\times v \cdot \curl_n\omega + \curl_n u\times v\cdot\omega - u\times \curl_n v \cdot\omega \right) dM,
\end{eqnarray*}
where $u,v,\omega\in H^1(TM)$.
\begin{lemma}\label{basicEstimates}
The trilinear for $b(u,v,\omega)$ has the following properties (see \cite{Il1990,Il1988,Il1994})
\begin{itemize}
\item[i)] $|b(u,v,\omega)| \leq c\left\|u \right\|_1\left\|v \right\|_1\left\|\omega \right\|_1$.
\item [ii)] $|b(u,u,v)| \leq c'|u| \left\| u \right\|_1 \left\| v\right\|_1.$
\item[iii)] If $\dive u=0$ then $b(u,v,v)=0$, $b(u,v,\omega)=-b(u,\omega,v)$ and $b(u,u,Au) = 0$.
\end{itemize}
\end{lemma}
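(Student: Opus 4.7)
The plan is to derive all three items from the explicit formula for $b(u,v,\omega)$ displayed just above the statement, together with standard two-dimensional tools: Hölder's inequality, the Sobolev embedding $H^1(TM)\hookrightarrow L^4(TM)$, Ladyzhenskaya's inequality $\norm{u}_{L^4}^2\leq c|u|\,\norm{u}_1$, and the integration-by-parts identities in \eqref{equal1}--\eqref{equal3} that allow one to move $\curl$ and $\curl_n$ across the inner product.

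For (i), I would start from the symmetric representation
\[
b(u,v,\omega) = \tfrac{1}{2}\int_M\bigl(-u\times v\cdot\curl_n\omega + \curl_n u\times v\cdot\omega - u\times\curl_n v\cdot\omega\bigr)\,dM.
\]
Bounding each cross product pointwise by the product of the lengths of its factors and applying Hölder with exponents $(4,4,2)$ places the factor carrying a $\curl_n$ in $L^2$ and the two remaining factors in $L^4$; the two-dimensional Sobolev embedding $H^1\hookrightarrow L^4$ then yields $|b(u,v,\omega)|\leq c\norm{u}_1\norm{v}_1\norm{\omega}_1$.

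For (ii), the cleanest route is to first prove the antisymmetry in (iii) and then write $b(u,u,v)=-b(u,v,u)$. Estimating in the coordinate form,
\[
|b(u,v,u)|\leq\int_M|u|^2|\nabla v|\,dM\leq\norm{u}_{L^4}^2\,\norm{\nabla v}_{L^2}\leq c\,|u|\,\norm{u}_1\norm{v}_1
\]
by Ladyzhenskaya's two-dimensional inequality. The point, and the main technical subtlety, is that the gain of a factor $|u|^{1/2}\norm{u}_1^{-1/2}$ over the naïve bound $c\norm{u}_1^2\norm{v}_1$ comes precisely from the antisymmetry, so (ii) relies on the divergence-free hypothesis that (iii) imposes.

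For (iii), the identity $b(u,v,v)=0$ follows from $\nabla_u v\cdot v=\tfrac{1}{2}u\cdot\nabla|v|^2$ and integration by parts using $\dive u=0$. Polarizing $b(u,v+\omega,v+\omega)=0$ yields the antisymmetry $b(u,v,\omega)=-b(u,\omega,v)$. For $b(u,u,Au)=0$ I would decompose $\nabla_u u=\nabla(|u|^2/2)-u\times\curl_n u$ via \eqref{equal2}: the gradient piece integrates to zero against $Au$ because $\dive(Au)=0$, while for the remaining piece one writes $\curl_n u=\psi\vec n$ and $Au=\curl\curl_n u=-\vec n\times\nabla\psi$, and then uses the vector identity $(a\times b)\cdot(c\times d)=(a\cdot c)(b\cdot d)-(a\cdot d)(b\cdot c)$ together with $u\cdot\vec n=0$ to reduce the integrand to $u\cdot\nabla(\psi^2/2)$, which integrates to zero by $\dive u=0$ once more. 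This last reduction is the main obstacle; everything else is routine once \eqref{equal1}--\eqref{equal3} are in place.
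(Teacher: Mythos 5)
The paper offers no proof of this lemma, deferring to the cited works of Ilyin; your argument correctly reconstructs the standard proofs found there — H\"older with the two-dimensional embedding $H^1\hookrightarrow L^4$ applied to the symmetric representation for i), the antisymmetry combined with Ladyzhenskaya's inequality for ii), and the identities \eqref{equal1}--\eqref{equal3} together with $u\cdot\vec n=0$ and $\dive(\curl\curl_n u)=0$ for iii). Your observation that ii) genuinely requires $\dive u=0$ (the direct estimate only yields $c|u|^{1/2}\norm{u}_1^{3/2}\norm{v}_1$, and the improvement to $c'|u|\norm{u}_1\norm{v}_1$ comes only through the antisymmetry) is correct and consistent with every application of the lemma in the paper, where $u$ is always divergence-free.
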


\section{Solvability and the existence of global attractor}\label{wellposed}
\subsection{The existence and uniqueness of the weak solutions}
We state and prove the existence and uniqueness of the weak solution of Equation \eqref{eq3} in the following theorem
\begin{theorem}\label{existence}
Let $u_0\in V\oplus \mathcal{H}^1$ and $f\in H\oplus \mathcal{H}^1$, then the equations \eqref{eq1'} and \eqref{eq2'} i.e equation \eqref{eq3} posseses a unique weak solution $u=u_1+u_2$: $u_1\in L^2([0,T],D(A))\cap C([0,T],V)$ and $u_2\in C^1([0,T],\mathcal{H}^1)$.
\end{theorem}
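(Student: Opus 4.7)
The plan is a standard Faedo--Galerkin approximation, but with care to handle the harmonic component $u_2$ separately. I would use as a Galerkin basis the eigenfunctions $\{\omega_j\}_{j\geq 1}$ of the Stokes operator $A$ (provided by the proposition) together with a fixed finite basis $\{h_k\}_{k=1}^K$ of the finite-dimensional harmonic space $\mathcal{H}^1$. Approximate solutions $u^n=u_1^n+u_2^n$ with $u_1^n=\sum_{j\leq n}c_j(t)\omega_j$ and $u_2^n=\sum_{k}d_k(t)h_k$ are sought by projecting \eqref{eq1'} and \eqref{eq2'} onto this subspace; the resulting ODE system is locally solvable by Cauchy--Peano.

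The key a priori estimate comes from testing \eqref{eq1'} with $u_1^n$ and \eqref{eq2'} with $u_2^n$ and \emph{adding}. The decisive algebraic point is that the nonlinear cross-terms cancel: since $u_2$ is harmonic so $\curl_n u_2=0$, the $\mathbb{B}$ term reduces to $\curl_n u_1\times u$, and by the triple-product identities the contributions $\int \curl_n u_1\times u_2\cdot u_1\,dM$ and $\int \curl_n u_1\times u_1\cdot u_2\,dM$ differ only in sign. This yields
\begin{equation*}
\tfrac12\tfrac{d}{dt}\bigl(|u_1^n|^2+\alpha^2\|u_1^n\|^2+|u_2^n|^2\bigr)+\nu\bigl(\|u_1^n\|^2+\alpha^2|Au_1^n|^2\bigr)+\sigma\bigl(|u_1^n|^2+|u_2^n|^2\bigr)=\langle f,u^n\rangle,
\end{equation*}
from which Cauchy--Schwarz, Young, and Poincaré \eqref{norm1} yield uniform bounds $u_1^n\in L^\infty([0,T],V)\cap L^2([0,T],D(A))$ and $u_2^n\in L^\infty([0,T],\mathcal{H}^1)$. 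To bound $\partial_t u_1^n$ in $L^2([0,T],H)$, I rewrite \eqref{eq1'} as $(I+\alpha^2A)\partial_t u_1^n=f_1-\nu(I+\alpha^2A)Au_1^n-\mathbb{P}(\curl_n u_1^n\times u^n)-\sigma u_1^n$ and invert $(I+\alpha^2A)$, which commutes with $A$; every resulting term is controlled in $L^2(H)$ using $u_1^n\in L^2(D(A))$, the smoothing of $(I+\alpha^2A)^{-1}$, and the 2D Sobolev embedding $V\hookrightarrow L^p$ to estimate the nonlinearity. The ODE for $u_2^n$ in finite dimensions gives $u_2^n\in C^1([0,T],\mathcal{H}^1)$ directly.

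To pass to the limit, Aubin--Lions gives strong convergence $u_1^n\to u_1$ in $L^2([0,T],V)$ along a subsequence (plus weak-$\star$ in $L^\infty(V)$, weak in $L^2(D(A))$), while finite-dimensionality of $\mathcal{H}^1$ gives strong convergence $u_2^n\to u_2$. These convergences, together with the continuity estimate Lemma \ref{basicEstimates}(i), suffice to pass to the limit in the trilinear term and in the linear terms of the weak formulation \eqref{WeakSol}. Continuity $u_1\in C([0,T],V)$ then follows from the standard Lions--Magenes interpolation using $u_1\in L^2(D(A))$ and $\partial_t u_1\in L^2(H)$.

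For uniqueness, I would set $w_1=u_1-\bar u_1$, $w_2=u_2-\bar u_2$ for two solutions with the same data, test the difference equations against $w_1$ and $w_2$, and again exploit the cancellation of the pure bilinear cross-term. The remaining bilinear differences $b(w,u,w_1),b(\bar u,w,w_1)$ and their $\mathbb{Q}$-counterparts are controlled by Lemma \ref{basicEstimates}(ii) together with the already-known bounds on $u,\bar u$ in $L^\infty(V)\cap L^2(D(A))$, yielding a Gronwall inequality for $|w_1|^2+\alpha^2\|w_1\|^2+|w_2|^2$. I expect the main obstacle to be precisely the bookkeeping around $u_2$: verifying the nonlinear cancellation between \eqref{eq1'} and \eqref{eq2'} (which relies crucially on $\curl_n u_2=0$) and making sure the smoothing by $(I+\alpha^2A)^{-1}$ absorbs the loss of integrability in the nonlinearity so that $\partial_t u_1\in L^2(H)$ as claimed.
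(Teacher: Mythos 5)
Your proposal follows essentially the same route as the paper: a Faedo--Galerkin scheme on the combined basis of Stokes eigenfunctions and a finite harmonic basis, the energy estimate obtained by testing with $u^n$ (where the nonlinearity drops out since $b(u,u,u)=0$ and $\curl_n u_2=0$), a time-derivative bound followed by Aubin--Lions compactness, and a Gronwall argument for uniqueness. The only real deviations are cosmetic --- you skip the paper's additional $H^2$ estimate obtained by testing with $Au$ (which is not needed for the stated regularity, since the $L^2([0,T],D(A))$ bound already follows from the $\alpha^2|Au|^2$ term in the dissipation) and you invoke Lions--Magenes interpolation for $C([0,T],V)$ where the paper argues via weak continuity and density --- so the two proofs are substantively the same.
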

\begin{proof}  
The proof of the theorem is according the Galerkin scheme and then using Aubin's lemma. Recall that the orthonormal basis of $H$ is $\left\{\omega_i\right\}_1^\infty$ i.e the eigenfunctions of the Stokes operator $A=\curl\curl_n$. Let $\left\{ h_j\right\}_1^n$ be an orthonormal basis of the space of harmonic forms $\cal{H}^1$. Then we obtain that $\left\{ \omega_i\oplus h_j \right\}_{i=1,j=1}^{i=\infty,j=n}:=\left\{ \zeta_m \right\}_1^\infty$ is an orthonormal basis of $H\oplus \cal{H}^1$. The finite dimensional Galerkin approximation, based on this basis to the equation \eqref{eq3} is
\begin{align}\label{Galerkin}
\begin{cases}
\frac{d}{dt}(u_m+\alpha^2Au_m)+ \nu A(u_m+\alpha^2Au_m) + P_m\mathbb{B}(u_m,u_m)+ \sigma u_m = P_mf,\\
u_m(0)=P_mu(0),
\end{cases}
\end{align}
where $u_m:=P_mu= u_{m1}+u_{m2}$ and $P_mf=f_{m1}+f_{m2}$.\\

{\bf Step 1. $H^1$-estimates.} Taking the scalar product in $L^2(TM)$ of \eqref{Galerkin} and $u_m$, we get
\begin{eqnarray*}
&&\frac{1}{2}\frac{d}{dt} (|u_m|^2 + \alpha^2 \left\| u_{m1} \right\|^2) + \nu (\left\| u_{m1} \right\|^2 + \alpha^2| Au_{m1} |^2) + \sigma (|u_{m1}|^2 + |u_{m2}|^2) \cr
&=& |\left<P_mf,u_m \right>|= |\left<f,u_m \right>|\cr
&\leq& |\left<f_1,u_{m1}\right>|_{H} + |\left<f_2,u_{m2} \right>|_{\mathcal{H}^1}.
\end{eqnarray*}
By Cauchy-Schwarz inequality, we have
\begin{equation*}
|\left< f_1,u_{m1}\right>| \leq |A^{-1}f_1||Au_{m1}|,\, |A^{-1/2}f_1|\left\|u_{m1} \right\|
\end{equation*}
and by Young's inequality we have
\begin{equation*}
|\left< f_1,u_{m1}\right>|\leq \frac{|A^{-1}f_1|^2}{2\nu\alpha^2} + \frac{\nu}{2}\alpha^2|Au_{m1}|^2,\, \frac{|A^{-1/2}f_1|^2}{2\nu} + \frac{\nu}{2}\left\|u_{m1} \right\|^2.
\end{equation*}
And we have clearly
$$|\left< f_2,u_{m2} \right>|_{\mathcal{H}^1} \leq \frac{1}{2} \left( \frac{|f_2|^2}{\sigma} + \sigma|u_{m2}|^2 \right).$$
By putting $L_1 = \min\left\{\dfrac{|A^{-1}f_1|^2}{\nu\alpha^2},\dfrac{|A^{-1/2}f_1|^2}{\nu} , \dfrac{|f_2|^2}{\sigma}\right\}$, and by using the above inequalities we obtain that
\begin{equation}\label{ine1}
\frac{d}{dt}(|u_m|^2 + \alpha^2\left\| u_{m1}\right\|^2) + \nu(\left\|u_{m1} \right\|^2 + \alpha^2| Au_{m1}|^2) + \sigma( 2|u_{m1}|^2+|u_{m2}|^2) \leq L_1.
\end{equation}
Combining $|u_{m1}|\leq \lambda_1^{-1/2}\left\| u_{m1} \right\|$ and $|\curl_nu_{m1}| \leq \lambda_1^{-1/2}|Au_{m1}|$, we get
\begin{equation*}
\frac{d}{dt}(|u_m|^2 + \alpha^2\left\| u_{m1}\right\|^2) + \nu\lambda_1(|u_{m1}|^2 + \alpha^2\left\| u_{m1} \right\|^2) + \sigma(2|u_{m1}|^2+|u_{m2}|^2)\leq L_1,
\end{equation*}
which gives (since $Au_{m1}=Au_m$)
\begin{equation*}
\frac{d}{dt}(|u_m|^2 + \alpha^2\left\| u_m\right\|^2) + \delta(|u_m|^2 + \alpha^2\left\| u_m \right\|^2) \leq L_1,
\end{equation*}
where $\delta = \min\left\{\nu\lambda_1,\sigma\right\}.$ Using Gronwall's inequality we obtain that
\begin{equation}\label{abso1}
|u_m|^2 + \alpha^2\left\| u_m \right\|^2 \leq e^{-\delta t}(|u_{m0}|^2 + \alpha^2\left\|u_{m0} \right\|^2) + \frac{L_1}{\delta}(1-e^{-\delta t}) \leq |u_{m0}|^2 + \alpha^2\left\|u_{m0} \right\|^2 + \frac{L_1}{\delta}:=l_1.
\end{equation}
Therefore, for $0<T<+\infty$ and $u_{m0}:=u_m(0) \in V \oplus \cal{H}^1$, we have $u_m \in L^\infty([0,T],V \oplus \cal{H}^1)$ where the bound is uniform in $m$.

{\bf Step 2. $H^2$-estimates.} Integrating inequality \eqref{ine1} over $(t,t+r)$, we get
\begin{align}\label{ine2}
\nu\int_t^{t+r} (\left\| u_{m1}(s)\right\|^2 + \alpha^2|Au_{m1}(s)|^2)ds &\leq rL_1 + |u_m(t)|^2 + \alpha^2\left\| u_{m1}(t) \right\|^2 \nonumber\\
&\leq rL_1 + l_1.
\end{align}
Taking now the inner product of the Galerkin approximation \eqref{Galerkin} with $Au_{m}=Au_{m1}$, and note that (see \cite{Il1994} Lemma 3.1)
\begin{equation}\label{ProCurl}
\left< \curl_n u_{m1}\times u_{m1},Au_{m1} \right>=\left< \curl_n u_{m1} \times u_2,Au_{m1} \right>=0,
\end{equation}
we get 
\begin{equation*}
\frac{1}{2}\frac{d}{dt}(\left\| u_{m1}\right\|^2 + \alpha^2|Au_{m1}|^2) + \nu (|Au_{m1}|^2 + \alpha^2 |A^{3/2}u_{m1}|^2) + \sigma \left\| u_{m1} \right\|^2 \leq |\left< f_1,Au_{m1}\right>|.
\end{equation*}
Observe that
\begin{equation*}
|\left< f_1,Au_{m1} \right>| \leq |A^{-1/2}f_1||A^{3/2}u_{m1}|,\, |f_1||Au_{m1}|.
\end{equation*}
Using again Young's inequality we have
\begin{equation*}
|\left<f_1,Au_{m1} \right>| \leq \frac{|A^{-1/2}f_1|^2}{2\nu\alpha^2} + \frac{\nu}{2}\alpha^2 |A^{3/2}u_{m1}|^2, \frac{|f_1|^2}{2\nu} + \frac{\nu}{2}|Au_{m1}|^2.
\end{equation*}
Putting $L_2 = \min\left\{ \dfrac{|A^{-1/2}f_1|^2}{\nu \alpha^2},\dfrac{|f_1|^2}{\nu} \right\}$, then we have
\begin{equation*}
\frac{d}{dt}\left(\left\|u_{m1} \right\|^2 + \alpha^2|Au_{m1}|^2\right) + \nu \left( |Au_{m1}|^2 + \alpha^2|A^{3/2}u_{m1}|^2 \right) + 2\sigma \left\| u_{m1}\right\|^2 \leq L_2.
\end{equation*}
Combining with $|Au_{m1}| \leq \lambda_1^{-1}|A^{3/2}u_{m1}|^2$ (Poincar\'e inequality), we get
\begin{equation*}
\frac{d}{dt}\left(\left\|u_{m1} \right\|^2 + \alpha^2|Au_{m1}|^2\right) + \delta'( \left\| u_{m1} \right\|^2 + \alpha^2 |Au_{m1}|^2) \leq L_2.
\end{equation*}
where $\delta'=\min\left\{\nu\lambda_1,2\sigma \right\}$. Hence
\begin{equation}\label{ine3}
\frac{d}{dt}\left(\left\|u_{m1} \right\|^2 + \alpha^2|Au_{m1}|^2\right) \leq L_2.
\end{equation}
Integrating the above inequality over $(s,t)$ to obtain that
\begin{equation}\label{ine4'}
\left\|u_{m1}(t) \right\|^2 + \alpha^2|Au_{m1}(t)|^2 \leq \left\|u_{m1} (s)\right\|^2 + \alpha^2|Au_{m1}(s)|^2 + (t-s)L_2,
\end{equation}
continuting integrating over $(0,t)$ and using \eqref{ine2} we obtain that
\begin{equation}\label{ine4}
t\left(\left\|u_{m1}(t) \right\|^2 + \alpha^2|Au_{m1}(t)|^2 \right) \leq \frac{1}{\nu}(tL_1+l_1) + \frac{t^2}{2}L_2,
\end{equation}
for all $t>0$. For $t\geq \dfrac{1}{\nu\lambda_1}$, we integrate \eqref{ine3} over $\left(t-\dfrac{1}{\nu\lambda_1},t \right)$ to establish
\begin{equation}\label{ine5}
\frac{1}{\nu\lambda_1}\left(\left\|u_{m1}(t) \right\|^2 + \alpha^2|Au_{m1}(t)|^2 \right) \leq \frac{1}{\nu}\left( \frac{1}{\nu\lambda_1}L_1 + l_1\right) + L_2\left( \frac{1}{2\nu\lambda_1}\right)^2.
\end{equation}
The inequalities \eqref{ine4} and \eqref{ine5} yield that there exists a function $l_2(t)$ satisfying the following conditions
\begin{itemize}
\item[i)] For all $t>0$ then $l_2(t)<+\infty$ and $\lim_{t\rightarrow +\infty}l_2(t)<+\infty$,
\item[ii)] If $u_{10} \in V$ but $u_{10} \notin D(A)$, then $\lim_{t\rightarrow 0^+}l_2(t)=+\infty$.
\end{itemize}
and
\begin{equation*}
\left\| u_{m1}(t)\right\|^2 + \alpha^2|Au_{m1}(t)|^2 \leq l_2(t),\, t>0.
\end{equation*}
\begin{remark}
Inequality \eqref{ine4'} yields that if $u_{10}:=u_1(0)\in D(A)$, then $u_{m1}(.)$ is bounded uniformly in $L^\infty([0,T],D(A))$ independently of $m$. On the other hand, if $u_{10}\in V$ but $u_{10} \notin D(A)$, then $u_{m1}\in L^\infty_{loc}((0,T],D(A))\cap L^2([0,T],D(A))$.
\end{remark}

{\bf Step 3. Estimates for $\dfrac{dv_{m}}{dt}$ and $\dfrac{du_{m}}{dt}$.} For each $\omega = \omega_1+\omega_2$ where $\omega_1\in D(A)$ and $\omega_2\in \cal{H}^1$, we have
\begin{equation*}
\frac{d}{dt}\left< v_{m},\omega\right> = -\nu \left<Av_{m1},\omega_1 \right> - \left< P_m\mathbb{B}(u_m,u_m)),\omega \right> - \sigma\left< u_{m},\omega \right> + \left< P_mf,\omega\right>.
\end{equation*}
Since $u_{m1}$ is uniformly bounded with respect to $m$ in $L^2([0,T],D(A))$, hence $v_{m1}$ is uniformly bounded in $L^2([0,T],H)$, as a consequence $Av_{m1}$ is uniformly bounded in $L^2([0,T],D(A)')$. Now we observe that
\begin{eqnarray*}
|\left< P_mf,\omega \right>| = |\left< f, P_m\omega \right>| &\leq& |\left< f_1, \omega_1 \right>|_H + |\left< f_2, \omega_2 \right>|_{\cal{H}^1} \cr
&\leq& |A^{-1}f_1||A\omega_1| + |f_2||\omega_2| \leq \lambda_1^{-1}|f_1||A\omega_1| + |f_2||\omega_2|.
\end{eqnarray*}
By $ii)$ of Lemma \ref{basicEstimates}
\begin{eqnarray*}
|\left<P_m\mathbb{B}(u_m,u_m),\omega\right>| &\leq&  |b(u_m,u_m,\omega)| \leq c'|u_m| \norm{u_m}_1 \norm{\omega}_1\cr
&\leq& c'|u_m| (\norm{u_{m1}} + |u_{m2}|)(\norm{\omega_1}+|\omega_2|)\cr 
&\leq& c'|u_m| (\norm{u_{m1}} + |u_{m2}|) \left(\lambda_1^{-1/2}|A\omega_1| + |\omega_2| \right).
\end{eqnarray*}
Moreover
\begin{eqnarray*}
|\left< u_{m},\omega \right>| &\leq& |\left< u_{m1},\omega_1 \right>| + |\left< u_{m2},\omega_2 \right>|\cr
&\leq& |A^{-1}u_{m1}||A\omega_1| + |u_{m2}||\omega_2| \cr
&\leq& \lambda_1^{-1}|u_{m1}||A\omega_1| + |u_{m2}||\omega_2|.
\end{eqnarray*}
We therefore conclude $\left\|\dfrac{dv_{m}}{dt}\right\|_{L^2([0,T],(D(A)\oplus \cal{H}^1)')}$ and $\left\|\dfrac{du_{m}}{dt}\right\|_{L^2([0,T],H\oplus \cal{H}^1)}$ are uniformly bounded with respect to $m$. By Aubin compactness theorem, there is a subsequene $u_{m'}(t)$ and a function $u(t)$ such that
\begin{eqnarray*}
&&u_{m'}(t)\longrightarrow u(t) \, \text{weakly in} \, L^2([0,T],D(A)\oplus \cal{H}^1),\cr
&&u_{m'}(t)\longrightarrow u(t) \, \text{strongly in} \, L^2([0,T],V \oplus \cal{H}^1),\cr
&&u_{m'}\longrightarrow u \, \text{in} \, C([0,T],H\oplus \cal{H}^1).
\end{eqnarray*}
These are equivalent to
\begin{eqnarray*}
&&v_{m'}(t)\longrightarrow v_1(t) \, \text{weakly in} \, L^2([0,T],H \oplus \cal{H}^1),\cr
&&v_{m'}(t)\longrightarrow v_1(t) \, \text{strongly in} \, L^2([0,T],(V \oplus \cal{H}^1)'),\cr
&&v_{m'}\longrightarrow v_1 \, \text{in} \, C([0,T],(D(A)\oplus \cal{H}^1)').
\end{eqnarray*}
Now relabel $u_{m'}$ (resp. $v_{m'}$) by $u_{m}$ (resp. $v_{m}$). For $\omega=\omega_1+\omega_2$ where $\omega_1\in D(A)$ and $\omega_2\in \cal{H}^1$, we have
\begin{eqnarray*}
\left< v_{m}(t),\omega \right> &+& \nu\int_{t_0}^t\left< v_{m1}(s),A\omega_1\right> ds + \int_{t_0}^t\left< \mathbb{B}(u_m(s),u_m(s)),P_m\omega\right>ds \cr
&&+ \sigma\int_{t_0}^t\left< u_{m}(s),P_m\omega\right>ds =\left< v_m(t_0),\omega\right> + \left< f,P_m\omega\right>(t-t_0),
\end{eqnarray*}
for all $t_0,t\in [0,T]$. Since the sequence $v_{m}(t)$ converges weakly in $L^2([0,T],H\oplus \cal{H}^1)$, $v_{m1}(t)$ converges weakly in $L^2([0,T],H)$ then
\begin{equation*}
\lim_{m\rightarrow\infty}\int_{t_0}^t \left< v_{m1}(s),A\omega_1\right>ds = \int_{t_0}^t\left< v_1(s),A\omega_1\right>ds,
\end{equation*}
and there is a subsequence of $v_{m}$ and relabel by $v_{m}$ which converges almost everywhere on $[0,T]$ to $v(t)$ in $(H\oplus \cal{H}^1)'\simeq H \oplus \cal{H}^1$. Therefore
\begin{eqnarray*}
&&\left< v_{m}(t),\omega \right> \longrightarrow \left< v(t),\omega\right>,\cr
&&\left< v_{m}(t_0),\omega \right> \longrightarrow \left< v(t_0),\omega\right>,
\end{eqnarray*}
almost everywhere for $t,t_0 \in [0,T]$.

Now we treat the convergence of the nonlinear term $\int_{t_0}^t \left< \mathbb{B}(u_m(s),u_m(s)),P_m\omega \right>ds$. We have
\begin{eqnarray*}
&&\left| \int_{t_0}^t \left< \mathbb{B}(u_m(s),u_m(s)),P_m\omega\right> - \left< \mathbb{B}(u(s),u(s)),\omega\right> ds \right|\cr
&\leq& \left| \int_{t_0}^t \left< \mathbb{B}(u_m(s),u_m(s)),P_m\omega-\omega \right> ds \right|:=I_m^I\cr
&&+ \left| \int_{t_0}^t \left< \mathbb{B}(u_m(s)-u(s),u_m(s)),\omega \right> \right|:=I_m^{II}\cr
&&+ \left| \int_{t_0}^t \left< \mathbb{B}(u(s),u_m(s)-u(s)),\omega \right> \right|:=I_m^{III}.
\end{eqnarray*}
To estimate $I_m^I$, we observe that there exists a constant $c''>0$ such that 
$$| \left< \mathbb{B}(u_m(s),u_m(s)),P_m\omega-\omega \right> | \leq c''\norm{u_{m}(s)}_1 \left\| P_m\omega-\omega \right\|_{L^\infty(TM)}|u_m(s)|.$$
Applying Agmon inequality in $2$-dimension: $\left\| \omega \right\|_{L^\infty(TM)} \leq C |\omega|^{1/2}_{L^2(TM)}\left\| \omega\right\|^{1/2}_{H^2(TM)}$, we get
$$| \left< B(u_m(s),u_m(s)),P_m\omega-\omega \right> | \leq c''C|u_m(s)|\norm{u_{m}(s)}_1 |P_m\omega-\omega|^{1/2} (| A(P_m\omega_1-\omega_1 )|+|P_m\omega_2-\omega_2| )^{1/2}.$$
Therefore
$$I_m^I \leq c''C\left( \int_{t_0}^t |u_m(s)|^2ds \right)^{1/2}\left( \int_{t_0}^t \left\|u_{m}(s)\right\|_1^2ds \right)^{1/2}|P_m\omega-\omega|^{1/2} (| A(P_m\omega_1-\omega_1 )|+ |P_m\omega_2-\omega_2|)^{1/2}. $$
Since $u_{m}$ is uniformly bounded in $L^\infty([0,T],V\oplus \cal{H}^1)$ and $u_m$ is uniformly bounded in $L^\infty([0,T],H \oplus \cal{H}^1)$ independently of $m$ (Step 1), we obtain that $\lim_{m\rightarrow\infty}I_m^I=0$.

Similarly, Agmon inequality and Poincar\'e inequality yeild 
\begin{eqnarray*}
\left\| \omega \right\|_{L^\infty(TM)} &\leq  C | \omega |^{1/2}_{L^2(TM)}\left\|\omega\right\|^{1/2}_{H^2(TM)} \leq C( |\omega_1| + |\omega_2|)^{1/2} (|A\omega_1|+|\omega_2|)^{1/2}\cr
&\leq C ( \lambda_1^{-1}|A\omega_1| + |\omega_2|)^{1/2} (|A\omega_1|+|\omega_2|)^{1/2},
\end{eqnarray*}
then $I_m^{II}$ can be estimated as
$$I_m^{II} \leq C \left( \int_{t_0}^t |u_m(s)-u(s)|^2ds\right)^{1/2} \left( \int_{t_0}^t \left\| u_{m}(s) \right\|_1^2\right)^{1/2} ( \lambda_1^{-1}|A\omega_1| + |\omega_2|)^{1/2} (|A\omega_1|+|\omega_2|)^{1/2}.$$
Combining with $u_m \rightarrow u$ strongly in $L^2([0,T],V\oplus \cal{H}^1)$ and the boundedness of $\norm{u_{m}}_1$, we get $\lim_{t\rightarrow\infty}I_m^{II}=0$. By the same manner we also have $\lim_{t\rightarrow\infty}I_m^{III}=0$. Therefore,
$$\int_{t_0}^t \left< \mathbb{B}(u_m(s),u_m(s)),P_m\omega \right>ds \longrightarrow \int_{t_0}^t \left< \mathbb{B}(u(s),u(s)),\omega \right>ds \, \text{as} \, m \rightarrow \infty.$$
We conclude that for almost everywhere $t_0,t\in [0,T]$ and every $\omega\in D(A) \oplus \cal{H}^1$: 
\begin{eqnarray*}
\left< v(t),\omega \right> - \left< v(t_0),\omega \right> + \nu\int_{t_0}^t\left< v_1(s),A\omega_1\right> ds &+& \int_{t_0}^t\left< \mathbb{B}(u(s),u(s)), \omega\right>ds + \sigma\int_{t_0}^t\left< u(s),\omega\right>ds \cr
&=&\int_{t_0}^t\left< f,\omega\right>ds.
\end{eqnarray*}
On the other hand, $v_1\in L^2([0,T],H)$ and $\omega_1\in D(A)$, then we have
$$ \left| \int_{t_0}^t \left< v_1(s),A\omega_1\right>ds \right| \leq \left( \int_{t_0}^t|v_1(s)|^2 ds\right)^{1/2} \left( \int_{t_0}^t|A\omega_1|^2 ds\right)^{1/2} \rightarrow 0  \hbox{  as  } t \rightarrow t_0.$$
And since $u\in L^\infty([0,T],V \oplus \cal{H}^1)$, then
\begin{eqnarray*}
&&\left|\int_{t_0}^t \left< \mathbb{B}(u(s),u(s)),\omega \right>ds  \right| \leq \cr
&&|\omega|_{L^\infty(TM)} \left( \int_{t_0}^t|u(s)|^2 ds\right)^{1/2}\left( \int_{t_0}^t \left\|u_(s)\right\|_1^2 ds \right)^{1/2} \longrightarrow 0 \hbox{  as  }  t\rightarrow t_0.
\end{eqnarray*}
Therefore for $t_0,t\in [0,T]$, $\left< v(t),\omega\right>\longrightarrow \left< v(t_0),\omega\right>$ as $t\rightarrow t_0$ for every $\omega\in D(A) \oplus \cal{H}^1$. Since $D(A) \oplus \cal{H}^1$ is dense in $V \oplus \cal{H}^1$ we have $\left< v(t),\omega\right>\longrightarrow \left< v(t_0),\omega\right>$ as $t\rightarrow t_0$ for every $\omega\in V \oplus \cal{H}^1$ hence $v\in C([0,T], (V\oplus \cal{H}^1)')$ and $u\in C([0,T],V \oplus \cal{H}^1)$. The existence of the solution $u$ for equation \eqref{eq3} holds. Finally, $u_2 \in C^1([0,T],\cal{H}^1)$ is a general property of solutions of linear finite-dimensional systems of differential
equations.

{\bf Step 4. Uniqueness.} Now we prove the uniqueness of the solution of Equation \eqref{eq3}. Suppose that $\omega=\omega_1+\omega_2$ is another solution of \eqref{eq3}. Putting $z=u-\omega$, hence $z_0=0$ and 
\begin{equation*}
\frac{d}{dt} (z+\omega+\alpha^2 A(z_1+\omega_1)) + \nu A(z_1+\omega_1 + \alpha^2A(z_1+\omega_1)) + \mathbb{B}(z+\omega,z+\omega) + \sigma (z+\omega) = f.
\end{equation*}
Subtracting this equation with 
\begin{equation*}
\frac{d}{dt} (\omega+\alpha^2 A\omega_1) + \nu A(\omega_1 + \alpha^2A\omega_1) + \mathbb{B}(\omega,\omega) + \sigma \omega = f,
\end{equation*}
we obtain the variation form
\begin{equation*}
\frac{d}{dt} (z+\alpha^2 Az_1) + \nu A(z_1 + \alpha^2Az_1) + \mathbb{B}(\omega,z) + \mathbb{B}(z,\omega) + \sigma z = 0.
\end{equation*}
Taking the scalar product in $L^2(TM)$ of the above equation and $z$
\begin{equation*}
\frac{d}{dt}(|z|^2+ \alpha^2\left\| z_1\right\|^2) + 2\nu(\left\| z_1\right\|^2 + \alpha^2|Az_1|^2) + 2b(z,\omega,z) + \sigma |z|^2=0.
\end{equation*}
Using $ii)$ in Lemma \ref{basicEstimates}, we get
\begin{equation*}
\frac{d}{dt}(|z|^2+ \alpha^2\left\| z_1\right\|^2) + 2\nu(\left\| z_1\right\|^2 + \alpha^2|Az_1|^2) + \sigma |z|^2 =2b(z,z,\omega) \leq 2c'|z|\left\|z\right\|_1\left\|\omega\right\|_1.
\end{equation*}
Putting $z(t)=e^{\nu t}\tilde{z}(t)$, we obtain that
\begin{equation*}
\frac{d}{dt}(|\tilde{z}|^2 + \alpha^2\left\| \tilde{z}_1 \right\|^2) + 2\nu(|\tilde{z}|^2 + \left\| \tilde{z}_1\right\|^2 + \alpha^2\left\| \tilde{z}_1 \right\|^2 + \alpha^2|A\tilde{z}_1|^2) + \sigma|\tilde{z}|^2 \leq 2c'|\tilde{z}|\left\| \tilde{z}\right\|_1 \left\| \omega\right\|_1
\end{equation*}
hence
\begin{equation*}
\frac{d}{dt}(|\tilde{z}|^2 + \alpha^2\left\| \tilde{z}_1 \right\|^2) + 2\nu\left\|\tilde{z}\right\|_1^2  \leq 2\nu\left\| \tilde{z}\right\|_1^2 + \frac{2c'}{\nu}|\tilde{z}|^2\left\| \omega\right\|_1^2
\end{equation*}
which implies
\begin{equation*}
\frac{d}{dt}(|\tilde{z}|^2 + \alpha^2\left\| \tilde{z}_1 \right\|^2) \leq \frac{2c'}{\nu}(|\tilde{z}|^2+\alpha^2 \left\| \tilde{z}_1 \right\|^2 )\left\| \omega\right\|_1^2.
\end{equation*}
Using Gronwall inequality, we can establish that
\begin{equation*}
|\tilde{z}(t)|^2 + \alpha^2\left\| \tilde{z}_1(t) \right\|^2 \leq \left(|\tilde{z}_0|^2 + \alpha^2\left\| \tilde{z}_{10} \right\|^2\right)\exp\left( \int_0^t \frac{2c'}{\nu} \left\| \omega(s)\right\|_1^2 ds\right).
\end{equation*}
Since $\tilde{z}_0=0$, we obtain that $\tilde{z} = 0$. The proof of uniqueness is completed. 
\end{proof}

\subsection{The existence of global attractor}
We recall the $H^1$-estimates which are obtained in the previous section
\begin{equation}\label{abso1}
|u(t)|^2 + \alpha^2\left\| u(t)\right\|^2 \leq e^{-\delta t}(|u_0|^2 + \alpha^2\left\|u_0 \right\|^2) + \frac{L_1}{\delta}(1-e^{-\delta t}).
\end{equation}
Hence
\begin{equation*}
\limsup_{t\rightarrow\infty}(|u(t)|^2+\alpha^2\| u(t)\|^2) \leq 2\frac{L_1}{\delta}:=\rho_0^2,
\end{equation*}
where $L_1= \min\left\{\dfrac{|A^{-1}f_1|^2}{\nu\alpha^2},\dfrac{|A^{-1/2}f_1|^2}{\nu} , \dfrac{|f_2|^2}{\sigma}\right\}$.

Now we have the $H^2$-estimates as
\begin{equation}\label{abso2}
\left\| u_1(t)\right\|^2 + \alpha^2|Au_1|^2 \leq e^{-\delta't}(\left\| u_{10}\right\|^2 + \alpha^2|Au_{10}|^2) + \frac{L_2}{\delta'}(1-e^{-\delta t}).
\end{equation}
Hence
\begin{equation*}
\limsup_{t\rightarrow\infty}(\|u(t)\|^2+\alpha^2 |A u(t)|^2) \leq \frac{L_2}{\delta'}:=\rho_1^2.
\end{equation*}

If the space $V\oplus \mathcal{H}^1$ is equipped with the following scalar product
\begin{eqnarray}\label{normVH}
 [u,v]_{V\oplus \cal{H}^1} &=& \left< \curl_n u, \curl_n v\right> +  \left< u, v\right>\cr
 &=& \left< u,(A+I)v \right>,
\end{eqnarray}
then after long enough time, $u(t)$ enters a ball in $V\oplus \cal{H}^1$ with the radius squared: $\rho^2=\rho_0^2+ \rho_1^2$. This means that the semigroup $S_t$ generated by \eqref{eq3} acts on $V\oplus \mathcal{H}^1$, it has an absorbing ball $B_{V\oplus\mathcal{H}^1}(0) \subset V\oplus \mathcal{H}^1$ with the radius $\rho$. The existence of absorbing ball $B_{D(A)\oplus \cal{H}^1}(0)$ in $D(A)\oplus\cal{H}^1$ is done in the same manner.

Now following Rellich lemma $S_t: V\oplus \mathcal{H}^1 \longrightarrow D(A)\oplus \mathcal{H}^1 \Subset V\oplus \mathcal{H}^1$, for $t>0$, is a compact semigroup from $V$ into itself. Since $S(t)B_{V\oplus \mathcal{H}^1}(0) \subset B_{V\oplus \mathcal{H}^1}(0)$, then the set $C_s:= \overline{\cup_{t\geq s}S(t)B_{V\oplus\mathcal{H}^1}(0)}^{V\oplus\mathcal{H}^1}$ is nonempty and compact in $V\oplus \mathcal{H}^1$. By the monotonic property of $C_s$ for $s>0$ and by the finite intersection property of compact sets, the set
$$\mathcal{A} = \cap_{s>0}C_s \subset V\oplus\mathcal{H}^1$$
is a nonempty compact set, and also the unique global attractor in $V\oplus \mathcal{H}^1$.

\section{Dimensions of global attractor}
\subsection{Fundamental theorem}
Let $H$ be an Hilbert space, $X$ be a compact set in $H$ and $S_t$ the nonlinear continuous semigroup generated by the evolution equation
$$\partial_tu = F(u), \, u(0)=u_0,$$
and suppose that
$$S_tX=X \hbox{  for  } t\geq 0.$$
The Hausdorff and fractal dimensions of $X$ are estimated by using the uniform Lyapunov exponents (see Theorem 3.3 in \cite{Te1984} for the origin case: $S_t$ is uniformly differentiable). The result was extended to the case of a uniformly quasi-differentiable semigroup in \cite{Il2001,Il2004}. 
\begin{definition}
The semigroup $S_t$ is uniformly quasi-differentiable on $X$ for each $t$ if for all $u,\, v \in X$ there exists a linear operator $DS_t(u)$ such that
$$\norm{S_t(u)-S_t(v)-DS_t(u)(u-v)} \leq h(r)\norm{u-v},$$
where $\norm{u-v}\leq r$, $h(r)\rightarrow 0$ as $r\rightarrow 0$ and $\sup_{t\in [0,\, 1]} \sup_{u\in X}\norm{DS_t(u)}_{\mathscr{L}(H,H)}<\infty$.
\end{definition}
The following result is establised in \cite{Il2001} (see Theorem 2.1).
\begin{theorem}\label{TheoremDim}
We assume that the mapping $u \rightarrow S_tu_0$ is uniformly quasi-differentiable in $H$ and its quasi-differential is a linear operator $L(t,u_0):\zeta\in H \rightarrow U(t)\in H$, where $U(t)$ is the solution of the first variation equation
\begin{equation}\label{HF1}
\partial_t U = \mathscr{L}(t,u_0)U, \, \, U(0)=\zeta.
\end{equation}
We assume, in addition, that for a fixed $t$ the operator $L(t, u_0) = DS_t(u)$ is compact and norm-continuous with respect
to $u \in X$.

For $N\geq 1, \, n\in \mathbb{N},$ we define $q_N$ by
\begin{equation}\label{Lyapunov}
q_N = \limsup_{t\rightarrow\infty}\sup_{u_0\in X}\sup_{\zeta_i\in H,\norm{\zeta_i}\leq 1, i=1,...,N}\left(  \frac{1}{t}\int_0^t\mathrm{Tr}\mathscr{L}(\tau, u_0)\circ Q_N(\tau)d\tau\right),
\end{equation}
where $Q_N(\tau)$ is the orthogonal projection in $H$ into $\mathrm{Span}\left\{U^1(\tau)...U^N(\tau)\right\}$, and $U^i(t)$ is the solution of \eqref{HF1} with $U^i(0)=\zeta_i$. 

Suppose $q_N \leq f(N)$, where $f$ is concave. The Hausdorff and fractal dimensions of $X$ have the same upper bound
$$\dim_H X \leq \dim_F X\leq N_*,$$
where $N_*\geq 1$ is such that $f(N_*) = 0$.
\end{theorem}
The concave condition of $f$ can be replaced by the condition that the quasi-differential $DS_t(u)$ contracts $N_*$-dimensional volumes uniformly for $u \in X$ (see Theorem 2.1 \cite{Il2004}).

%To control the norms of nonlinear terms we need the Sobolev-Lieb-Thirring inequality which is considered initially in \cite{Te88}:
%\begin{theorem}\label{SoLiThi}
%Let $\psi_1,\psi_2...\psi_N \in H^1(M)$ be scalar functions on $M$ that form a sub-orthonormal family in $L^2(M)$: $\sum_{i,j=1}^N\zeta_i\zeta_j(\psi_i,\psi_j)\leq \sum_{i=1}^N\zeta_i^2$. Let $\int_M \psi dM=0$. Then
%$$\int_M\left( \sum_{i=1}^N\psi_i^2 \right)^2dM \leq k_1\sum_{i=1}^N\int_M |\nabla\psi_i|^2dM,$$
%where the constant $k_1=k_1(M)$ is independent of $N$ and is invariant under homothety $M\rightarrow \lambda M$, that is , it depends only on the shape of $M$.
%\end{theorem}
%The constant $k_1$ is improved on the specific two-dimensional closed manifolds such as $2$-sphere $S^2$ and two square torus $T^2$ in the recent works \cite{Ily2019,Ily2020}.

\subsection{Estimate of the attractor's dimensions}\label{Dim}

\subsubsection{Upper bound}
For simplicity we consider the simplified Bardina equation on the $2$-sphere $S^2$ which is a specific case of $M$ with $\mathcal{H}^1= \left\{ \vec{0} \right\}$.
First, we rewrite the equation to a vorticity scalar form. Recall that the origin equation is
\begin{equation*}
(u_t - \alpha^2\Delta u_t) - \nu (\Delta u - \alpha^2 \Delta^2 u) + (u\cdot \nabla) u + \nabla p = f.
\end{equation*}
Let $u = -\curl\psi$. Then applying $\curl_n$ to the above equation we obtain
\begin{equation*}
(\Delta\psi_t - \alpha^2\Delta^2\psi_t) - \nu\Delta(\Delta\psi - \alpha^2\Delta^2\psi) + (u\cdot\nabla)\Delta\psi = \curl_n f,
\end{equation*}
where $u = n \times \nabla \psi$.

Putting $\varphi = \curl_n u = \Delta \psi$ we get
\begin{equation}\label{CurlEq1}
(\varphi_t - \alpha^2 \Delta \varphi_t) - \nu\Delta(\varphi - \alpha^2 \Delta \varphi) + u \cdot \nabla \varphi = \curl_n f. 
\end{equation}
Hence
\begin{equation}\label{CurlEq11}
\varphi_t - \nu \Delta \varphi + (I-\alpha^2\Delta)^{-1}(u \cdot \nabla \varphi) = (I-\alpha^2\Delta)^{-1}\curl_n f. 
\end{equation}
We define the bilinear operator $J(a,b)$ as follows
$$J(a,b) = n\times \nabla a\cdot\nabla b.$$
We have
$$u \cdot \nabla \varphi = J(\psi,\varphi) = J(\Delta^{-1}\varphi,\varphi)$$
Therefore, Equation \eqref{CurlEq11} becomes
\begin{equation}\label{CurlEq2}
\varphi_t - \nu \Delta \varphi + (I-\alpha^2\Delta)^{-1}J(\Delta^{-1}\varphi,\varphi) = (I-\alpha^2\Delta)^{-1}\curl_n f. 
\end{equation}
\begin{remark}
The bilinear operator $J(a,b)$ has the following properties
$$\int_{S^2} J(a,b)dx = \int_{S^2} J(a,b)bdx=0 \hbox{  and  } \int_{S^2} J(a,b)cdx = \int_{S^2} J(b,c)adx.$$
\end{remark}
By multiplying \eqref{CurlEq1} by $\varphi$ in $L^2(S^2)$ and by using \eqref{ProCurl} we obtain that
\begin{equation*}
\frac{1}{2}\frac{d}{dt} (|\varphi|^2+ \alpha^2|\nabla\varphi|^2) + \nu (|\nabla\varphi|^2 + \alpha^2|\Delta\varphi|^2) = \left<\curl_n f_1,\varphi \right>= \left< f, \curl_n\varphi \right>.
\end{equation*}
Therefore,
\begin{equation*}
\frac{d}{dt} (|\varphi|^2+ \alpha^2|\nabla\varphi|^2) + 2\nu (|\nabla\varphi|^2 + \alpha^2|\Delta\varphi|^2) \leqslant \frac{|f|^2}{\nu} + \nu|\nabla\varphi|^2.
\end{equation*}
Using the Poincar\'e and Gronwall inequalities and integrating with respect to $t$ yield
\begin{equation}\label{Evar}
\limsup_{t\to\infty}|\varphi(t)|^2 \leqslant \frac{|f|^2}{\lambda_1\nu^2}
\end{equation}
and
\begin{equation}\label{Evarphi}
\limsup_{t\to\infty}\frac{1}{t}\int_0^t|\nabla\varphi(\tau)|^2d\tau \leqslant \frac{|f|^2}{\nu^2}.
\end{equation}

We consider the variational equation corresponding to \eqref{CurlEq2}:
\begin{equation*}
\Phi_t = \nu\Delta\Phi - (I-\alpha^2\Delta)^{-1}J(\Delta^{-1}\Phi,\varphi) - (I-\alpha^2\Delta)^{-1}J(\Delta^{-1}\varphi,\Phi):= \mathscr{L}(t,\varphi_0)\Phi,
\end{equation*}
where $\Phi(0) = \zeta$.

It is standard to show that this equation has a unique solution denoted by
$$L(t, \varphi(0))\zeta := \Phi(t).$$
Using the general theorems in \cite{BaVi,Te1984} we can show that the semigroup $S_t$ is uniformly quasi-differentiable on the attractor $\mathcal{A}$ of the simplified Bardina equation.

We now estimate the fractal dimension of the attractor.
\begin{theorem}\label{HF}
The Hausdorff and fractal dimension of the attractor $\mathcal{A}$ of the simplified Bardina equation are finite and satisfy 
\begin{equation}\label{Upper1}
\dim_H\mathcal{A} \leqslant \dim_F \mathcal{A} \leqslant G^{2/3} \left(\frac{(4+\epsilon_G)^3}{3\pi(1+\alpha^2)^3}(\log G - \frac{1}{2}\log \frac{\pi}{2})\right)^{1/3},
\end{equation}
\begin{equation}\label{Upper2}
\dim_H\mathcal{A} \leqslant \dim_F \mathcal{A} \leqslant \left( \frac{12}{\sqrt{\pi(1+\alpha^2)^3}} \right)^{2/3} G^{2/3} \left( \log G + \frac{1}{2} + \log \frac{3\sqrt{2}}{\sqrt{\pi (1+\alpha^2)^3}})\right)^{1/3},
\end{equation}
where $G= \dfrac{|f|}{\nu^2\lambda_1}$ is the Grashof number and $\epsilon_G \to 0$, when $G\to \infty$. 
\end{theorem}
\begin{proof}
Let
$$\mathbb{H} = L^2(S^2)\cap \left\{ \varphi: \int_{S^2} \varphi dS^2 =0 \right\} \hbox{  and  } \mathbb{H}^1 = H^1(S^2) \cap \mathbb{H}.$$
We define a scalar product on $\mathbb{H}^1$ depending $\alpha$ by
\begin{equation}\label{NormVH}
\left<\left< \varphi, \varphi'\right>\right>_\alpha= \left< \varphi, (I-\alpha^2\Delta)\varphi'\right>.
\end{equation}
Clearly, we have that
$$|\varphi|^2 = \norm{\varphi}_{\alpha}^2 - \alpha^2 |\nabla\varphi|^2 \leqslant \norm{\varphi}^2_{\alpha} - \alpha^2\lambda_1 |\varphi|^2.$$
Hence
$$|\varphi|^2 \leqslant \frac{1}{1+\alpha^2\lambda_1}\norm{\varphi}^2_{\alpha}.$$
In the space $Q_N(\tau)(\mathbb{H})$ we take an orthonormal basis $\left\{\theta_i \right\}_{i=1}^N \subset \mathbb{H}^1$ with respect to \eqref{NormVH}. 

Now we have
\begin{eqnarray}\label{OperatorTrace1}
&&\mathrm{Tr}\mathscr{L}(\tau,\varphi_0)\circ Q_N(\tau) = \sum_{i=1}^N \left<\left<\mathscr{L}(\tau,\varphi_0)\theta_i, \theta_i \right>\right> \cr
&=& -\nu\sum_{i=1}^N \left<\left< \Delta\theta_i,\theta_i \right>\right> \cr
&&- \sum_{i=1}^N \left<\left<(I+\alpha^2A)^{-1}(J(\Delta^{-1}\theta_i,\varphi)+ J(\Delta^{-1}\varphi,\theta_i) ), \theta_i \right>\right> \cr
&=& -\nu\sum_{i=1}^N(|\nabla\theta_i|^2 + \alpha^2|\Delta\theta_i|^2) - \sum_{i=1}^N \left<J(\Delta^{-1}\theta_i,\varphi)+ J(\Delta^{-1}\varphi,\theta_i), \theta_i \right> \cr
&=& -\nu\sum_{i=1}^N(|\nabla\theta_i|^2 + \alpha^2|\Delta\theta_i|^2) - \sum_{i=1}^N \left<J(\Delta^{-1}\theta_i,\varphi), \theta_i \right>\cr
&\leq& -\nu\sum_{i=1}^N(|\nabla\theta_i|^2 + \alpha^2|\Delta\theta_i|^2) - \int_M \sum_{i=1}^N \theta_i (n\times\nabla\Delta^{-1}\theta_i)\cdot \nabla\varphi dx\cr
&\leq& -\nu\sum_{i=1}^N(|\nabla\theta_i|^2 + \alpha^2|\Delta\theta_i|^2) + \int_M \left(\sum_{i=1}^N \theta^2_i \right)^{1/2} \left(\sum_{i=1}^N |v_i|^2 \right)^{1/2}|\nabla\varphi| dx\cr
&\leq& -\nu\sum_{i=1}^N(|\nabla\theta_i|^2 + \alpha^2|\Delta\theta_i|^2) + \norm{\rho}^{1/2}_\infty \left(\sum_{i=1}^N |\theta_i|^2 \right)^{1/2}|\nabla\varphi|\cr
&\leq& -\nu\sum_{i=1}^N(|\nabla\theta_i|^2 + \alpha^2|\Delta\theta_i|^2) + \frac{1}{1+\alpha^2\lambda_1}\norm{\rho}^{1/2}_\infty \left(\sum_{i=1}^N \norm{\theta_i}_{\alpha}^2 \right)^{1/2}|\nabla\varphi|\cr
&\leq& -\nu\sum_{i=1}^N(|\nabla\theta_i|^2 + \alpha^2|\Delta\theta_i|^2) + \frac{1}{1+\alpha^2\lambda_1}\norm{\rho}^{1/2}_\infty N^{1/2}|\nabla\varphi|,
\end{eqnarray}
where
$$\rho(s) = \sum_{i=1}^N|v_i(s)|^2 = \sum_{i=1}^n |n \times \nabla\Delta^{-1}\theta_i|^2.$$
With the scalar product \eqref{NormVH} the following estimate of the function $\rho$ is valid (for details see Appendix).
\begin{eqnarray}\label{INE}
&&2\sqrt{\pi(1+\alpha^2\lambda_1)}\norm{\rho}_\infty^{1/2} \leqslant (2\log(k+1)+1)^{1/2} + \sqrt{2}(k+1)^{-1}\left( \lambda_1^{-1}\sum_{i=1}^N|\nabla\theta_i|^2 \right)^{1/2} \cr
&\leqslant& (2\log(k+1)+1)^{1/2} + \sqrt{2}(k+1)^{-1}\left( \lambda_1^{-1}\sum_{i=1}^N(|\nabla\theta_i|^2 + \alpha^2|\Delta\theta_i|^2) \right)^{1/2},
\end{eqnarray}
where $k$ is a positive integer.

Since on the $S^2$ the eigenvalues of $\Delta$ are $\lambda_n = n(n+1)$ of multiplicity $2n+1$ for $n=1,2,...$, we have
$$T(t,\varphi_0):= \sum_{i=1}^N(|\nabla\theta_i|^2 + \alpha^2|\Delta\theta_i|^2) \geqslant \sum_{i=1}^N  \lambda_i \geqslant \frac{\lambda_1}{4} N^2.$$
Hence
$$N \leqslant 2((\lambda_1)^{-1}T)^{1/2}.$$
Equation \eqref{OperatorTrace1} implies now,
\begin{eqnarray*}
&&\mathrm{Tr}\mathscr{L}(\tau,\varphi_0)\circ Q_N(\tau) \leqslant -\nu\lambda_1 (\lambda_1^{-1} T) \cr
&&+ \pi^{-1/2}(1+\alpha^2\lambda_1)^{-3/2} \left( (2\log(k+1)+1)^{1/2} + \sqrt{2}(k+1)^{-1}(\lambda_1^{-1}T) \right)(\lambda_1^{-1}T)^{1/4}|\nabla\varphi|^2
\end{eqnarray*}
If we take $k = [\lambda_1^{-1}T]-1$, then 
$$ (2\log(k+1)+1)^{1/2} + \sqrt{2}(k+1)^{-1}(\lambda_1^{-1}T) \leqslant C_N(\log(\lambda_1^{-1}T)+1)^{1/2},$$
where
\begin{align*}
c_N = \begin{cases}
3 \hspace*{3,2cm} N \geqslant 1,\cr
\sqrt{2} + \epsilon_N \hspace*{2cm} \epsilon_N\to 0, \hbox{  when  } N \to \infty.
\end{cases}
\end{align*}
Putting
$$\mathcal{N}(t,\varphi_0)^2: = \frac{1}{t}\int_0^t\lambda_1^{-1}T(\tau,\varphi_0)d\tau.$$
Since $N \leqslant 2((\lambda_1)^{-1}T)^{1/2}$, we have $\mathcal{N}\geqslant \frac{N}{2}$.

Therefore, we have
\begin{eqnarray*}
&&\frac{1}{t}\mathrm{Tr}\mathscr{L}(\tau,\varphi_0)\circ Q_N(\tau)\cr
&\leqslant& -\nu\lambda_1\mathcal{N}^2 + \pi^{-1/2}(1+\alpha^2\lambda_1)^{-3/2}c_N\frac{1}{t}\int_0^t(\log(\lambda_1^{-1}T)+1)^{1/2}(\lambda_1^{-1}T)^{1/4}|\nabla\varphi(\tau)|d\tau\cr
&\leqslant& -\nu\lambda_1\mathcal{N}^2 + \pi^{-1/2}(1+\alpha^2\lambda_1)^{-3/2}c_N \left(\frac{1}{t}\int_0^t(\log(\lambda_1^{-1}T)+1)(\lambda_1^{-1}T)^{1/2}d\tau\right)^{1/2}\cr
&&\times \left(\frac{1}{t}\int_0^t |\nabla\varphi(\tau)|^2d\tau \right)^{1/2}\cr
&\leqslant& -\nu\lambda_1\mathcal{N}^2 + \pi^{-1/2}(1+\alpha^2\lambda_1)^{-3/2}c_N(\log \mathcal{N}^2+ 1)^{1/2}\mathcal{N}^{1/2}\frac{|f|}{\nu}\cr
&=& \nu\lambda_1\mathcal{N}^2 (-\mathcal{N}^{3/2} + K(2\log \mathcal{N}+1)^{1/2}):= g(\mathcal{N}), 
\end{eqnarray*}
where $K= \pi^{-1/2}(1+\alpha^2\lambda_1)^{-3/2}c_N G, \, G= \frac{|f|}{\lambda_1\nu^2}$ and on the last inequality we have used the inequality \eqref{Evarphi} and applied Jensen's inequality to the concave
function $x\mapsto x^{1/2}(1+\log x)$, with $x= \lambda_1^{-1}T\geqslant 1$.

We have $g(\mathcal{N}) \geqslant 0$ if
$$\mathcal{N}^{3/2} \leqslant K(2\log \mathcal{N}+1)^{1/2}.$$
It is equivalent to
\begin{equation}\label{log}
 3\log\mathcal{N} - \log(2\log\mathcal{N}+1) \leqslant 2\log K.
\end{equation}
Using a fact that 
\begin{equation}\label{log'}
\log(2\log\mathcal{N}+1) < \epsilon_K \log\mathcal{N}, \hbox{ where } \epsilon_K \to 0, \, \mathcal{N}\to \infty,
\end{equation}
we obtain since \eqref{log} that
$$\mathcal{N} \leqslant K^{2/3}\left( \frac{4}{3} + \epsilon_K \right)(\log K)^{1/3}.$$
Therefore, we can replace the function $g$ by a concave function $g'$ such that $g'(\mathcal{N})\geqslant g(\mathcal{N})$ and $g'(\mathcal{N}) =0$ if 
$$\mathcal{N} = K^{2/3}\left( \frac{4}{3} + \epsilon_K \right)(\log K)^{1/3}.$$
By using Theorem \ref{TheoremDim} and $N<2\mathcal{N}$ we obtain that
\begin{equation*}
\dim_H\mathcal{A} \leqslant \dim_F \mathcal{A} \leqslant G^{2/3} \left(\frac{(4+\epsilon_G)^3}{3\pi(1+\alpha^2\lambda_1)^3}(\log G - \frac{1}{2}\log \frac{\pi}{2})\right)^{1/3},
\end{equation*}
where $G= \dfrac{|f|}{\nu^2\lambda_1}$ is the Grashof number and $\epsilon_G \to 0$, when $G\to \infty$.
The upper bounds \eqref{Upper1} holds.

If we replace \eqref{log'} by
$$\log(2\log\mathcal{N}+1)<\log 2 + \log \mathcal{N}, \, \mathcal{N}\geqslant 1,$$
then by the same way as above we can also obtain that
\begin{equation*}
\dim_H\mathcal{A} \leqslant \dim_F \mathcal{A} \leqslant \left( \frac{12}{\sqrt{\pi(1+\alpha^2\lambda_1)^3}} \right)^{2/3} G^{2/3} \left( \log G + \frac{1}{2} + \log \frac{3\sqrt{2}}{\sqrt{\pi (1+\alpha^2\lambda_1)^3}})\right)^{1/3}.
\end{equation*}
The upper bound \eqref{Upper2} holds.

\end{proof}
\begin{remark}\label{Remark}
\item[(i)] As $\alpha$ tends to zero we get the same upper bound of the Haussdorff and fractal's dimensions of the global attractor for the Navier-Stokes equation on $S^2$.
\item[(ii)] On the two dimensional closed manifold $M$ we can also prove an estimate that likes \eqref{INE} as (see Appendix)
\begin{equation}\label{INE'}
\sqrt{(1+\alpha^2\lambda_1)}\norm{\rho}_\infty^{1/2} \leqslant L \left( (2\log(k+1)+1)^{1/2} + \sqrt{\lambda_1}(k+1)^{-1}\left( \lambda_1^{-1}\sum_{i=1}^N|\nabla\theta_i|^2 \right)^{1/2} \right).
\end{equation}
Therefore, if $\mathcal{H}^1=\left\{ \vec{0} \right\}$ the same proof (without explicit constants, of course) gives the
estimate of the attractor dimension,
$$\dim_F \mathcal{A} \leqslant c(\frac{1}{1+\alpha^2\lambda_1})G^{2/3}(\log G +1)^{1/3}$$
for the simplified Bardina equation on a simply connected compact manifold or in a simply
connected bounded domain $\Omega$,supplemented with boundary conditions $u \cdot n|_{\partial \Omega}=0, \curl_n u|_{\partial \Omega} =0$.
\end{remark}
\begin{theorem}
For the multiply connected manifold or domain $M$, i.e, $\mathcal{H}^1 \ne \left\{ \vec{0} \right\}$. Assume that the phase space is assumed to be orthogonal to the finite dimensional space of
harmonic vector fields. The attractor's dimensions of the simplified Bardina on $M$ satisfy that
$$\dim_H\mathcal{A}_{V\oplus \mathcal{H}^1}\leqslant \dim_F\mathcal{A}_{V\oplus \mathcal{H}^1} \leqslant c(\frac{1}{1+\alpha^2\lambda_1})G^{2/3}(\log G +1)^{1/3} +n,$$
where $\dim \mathcal{H}^1=n$.
\end{theorem}
\begin{proof}
On $M$ we recall that the simplified Bardina equation is
\begin{equation*}
\frac{d}{dt}(u_1+\alpha^2Au_1) + \nu A(u_1 + \alpha^2Au_1) + \mathbb{P}(\curl_n u_1\times u_1 + \curl_n u_1\times u_2) + \sigma u_1 = f_1,
\end{equation*}
\begin{equation*}
\frac{d}{dt}u_2 + \mathbb{Q}(\curl_n u_1\times u_2) + \sigma u_2 = f_2.
\end{equation*}
Rewrite the first equation in the scalar form we get
\begin{equation*}
\frac{d}{dt}\varphi + \nu A\varphi + (I+\alpha^2A)^{-1}\mathbb{P}[(u_1+u_2)\cdot \nabla \varphi] + \sigma (I+\alpha^2A)^{-1}\varphi = (I+\alpha^2A)^{-1}\curl_n f_1,
\end{equation*}
\begin{equation*}
\frac{d}{dt}u_2 + \mathbb{Q}(\curl_n u_1\times u_2) + \sigma u_2 = f_2,
\end{equation*}
where $u=u_1+u_2$ and $\varphi = \curl_n u_1$.

The variational equations are
\begin{eqnarray*}
&&(\delta \varphi)_t = -\nu A \delta\varphi + (I+\alpha^2A)^{-1}J(A^{-1}\delta \varphi,\varphi) + (I+\alpha^2A)^{-1}J(A^{-1}\varphi,\delta\varphi)\cr
&& - (I+\alpha^2A)^{-1}\mathbb{P}(\delta u_2\cdot \nabla \varphi+ u_2\cdot\nabla\delta\varphi) - \sigma\delta\varphi,\cr
&&(\delta u_2)_t = -\mathbb{Q}(\varphi \times \delta u_2 + (\delta\varphi)\times u_2) - \sigma \delta u_2.
\end{eqnarray*}
In the matrix form, we put $U= \begin{bmatrix}
        \delta\varphi \\
        \delta u_2
    \end{bmatrix}$, then
\begin{align}\label{Matrix} 
U_t =  -\nu \begin{bmatrix}
        A & 0\\
        0 & 0
    \end{bmatrix}U -   \begin{bmatrix}
        A_{11} & A_{12}\\
        A_{21} & A_{22}
    \end{bmatrix}U,  
\end{align}
where
\begin{eqnarray*}
&&A_{11} = -(I+\alpha^2A)^{-1}J(A^{-1}*,\varphi) + (I+\alpha^2A)^{-1}J(A^{-1}\varphi,*) - (I+\alpha^2A)^{-1}\mathbb{P}(u_2\cdot\nabla *) -\sigma *\cr
&&A_{12} = (I+\alpha^2A)^{-1}\mathbb{P}(* \cdot \nabla \varphi)\cr
&&A_{21} = -\mathbb{Q}(*\times u_2), \, A_{22} = -\mathbb{Q}(\varphi\times *)-\sigma *.
\end{eqnarray*}

We define the scalar product 
$$\left<\left<(\varphi,u_2),(\varphi',u_2')  \right>\right> = \left< \varphi, (I+\alpha^2A)\varphi \right> + \left< u_2, u_2'\right>.$$
We take $\left\{ (\theta_i,0), (0,h_j) \right\},\, i=1,2...N,\, j=1,2...,l$ be an orthonormal basic in $\mathbb{H}\oplus \mathcal{H}^1$,
where $\left\{ \theta_i\right\}_1^N$ are orthonormal in $\mathbb{H}$ with respect to the norm $\norm{\theta}_\alpha = |\theta|+ \alpha^2|\nabla\theta|$ and $\left\{ h_i\right\}_1^l$ are orthonormal in $\mathcal{H}^1$.

Using the variational equation \eqref{Matrix} and
$$\left< \curl_n u_1 \times h_j, h_j \right> = 0,$$
we can see that the harmonic fields have no contribution on $q_{N+n}$, then
$$\dim_F\mathcal{A}_{V\oplus \mathcal{H}^1}(u) \leqslant \dim_F\mathcal{A}_{\mathbb{H}}(\varphi) + n.$$
By using (ii) Remark \ref{Remark} we have
$$\dim_F\mathcal{A}_{\mathbb{H}}(\varphi) \leqslant c(\frac{1}{1+\alpha^2\lambda_1})G^{2/3}(\log G +1)^{1/3}.$$
The proof is completed.
\end{proof}

\subsubsection{Lower bound}
In order to check the sharp upper bound of the attractor's dimensions we will consider its lower bound. 
We known that if the Grashof number $G$ is arbitrarily large, then the corresponding attractor of the Navier-Stokes equation on $S^2$ has dimension $0$ and reduces to the globally attractive stationary point (see the examples in \cite{Ma,Il1994}).  We give a similar argument for the simplified Bardina equation on $S^2$ with in the following proposition.
\begin{proposition}
If $G= \dfrac{|f|}{\lambda_1\nu}<\dfrac{1}{\sqrt 2}$ then $\dim\mathcal{A} = 0$.
\end{proposition}
\begin{proof}
Suppose that $\bar{\varphi}= \Delta\bar{\psi}$ is a stationary solution of \eqref{CurlEq1}. Taking the scalar product of \eqref{CurlEq1} with $\bar{\varphi}$ we get
\begin{equation*}
|\nabla\bar{\varphi}|^2 \leqslant \frac{|f|^2}{\nu^2}.
\end{equation*}
Let $\varphi = \bar{\varphi} + \varphi'$ be a solution of the evolution problem \eqref{CurlEq1}. Then $\varphi'$ satisfies the following equation
$$(\varphi'_t - \alpha^2\Delta \varphi'_t) - \nu(\Delta\varphi' - \alpha^2\Delta\varphi') + J(\Delta^{-1}\bar{\varphi},\varphi')+ J(\Delta^{-1}\varphi',\varphi') + J(\Delta^{-1}\varphi', \bar{\varphi}) =0.$$
Taking the scalar product with $\varphi'$ we obtain that
\begin{eqnarray}\label{E2}
&&\frac{1}{2}\frac{d}{dt}(|\varphi'|^2 + \nu|\nabla\varphi'|^2) + \nu(|\nabla\varphi'|^2 + \alpha^2|\Delta\varphi'|^2) = -\int_{S^2}J(\Delta^{-1}\varphi', \bar{\varphi})\varphi'dx\cr
&\leqslant& \int_{S^2}|\nabla\Delta^{-1}\varphi'||\nabla\bar{\varphi}||\varphi'|dx\cr
&\leqslant& \norm{\nabla\bar{\varphi}}_{L^2}\norm{\nabla\Delta^{-1}\varphi'}_{L^4}\norm{\varphi'}_{L^4}\cr
&\leqslant& \frac{|f|}{\nu}\norm{\nabla\Delta^{-1}\varphi'}_{L^4}\norm{\varphi'}_{L^4}
\end{eqnarray}
Denote $\norm{.}_{L^2} = |.|$, by H\"older's and Ladyzhenskaya’s inequality we have
$$\norm{\varphi'}_{L^4} \leqslant c_1|\varphi'||\nabla\varphi'|, \, \varphi'\in \mathbb{H}^1(S^2)\cap \mathbb{H},$$
$$\norm{\nabla\psi'}_{L^4} \leqslant c_2 |\nabla\psi'|^{1/2}|\Delta\psi'|^{1/2}, \, \psi'\in \mathbb{H}^2(S^2),$$
where $c_1,c_2\leqslant 2^{1/4}$.

Combining the above inequalities with \eqref{E2} yield
$$\frac{1}{2}\frac{d}{dt}(|\varphi'|^2 + \nu|\nabla\varphi'|^2) + \left( \nu - c_1c_2\frac{|f|}{\lambda_1\nu}  \right)|(\nabla\varphi'|^2 + \alpha^2|\Delta\varphi'|^2)  \leqslant 0.$$
Therefore, if $G<1/(c_1c_2)<1/\sqrt 2$, then the stationary solution $\bar{\varphi}$ is globally exponentially
attractive, and $\mathcal{A} = \bar{\varphi}$.
\end{proof}
Since a global attractor is a maximal strictly invariant compact set, it follows that the attractor contains the unstable manifolds of stationary points, that is the invariant manifolds along which the solutions convergence exponentially to the stationary points as $t$ tends to infinity.
Since that Liu \cite{Liu}, Ilyin and Titi \cite{Il2004'} provided lower bounds of the attractor's dimensions for the Navier-Stokes equation (the case $\alpha=0$) and the Navier-Stokes-alpha equation on the two-dimensional torus $T$ by constructing a family of stationary solutions arising from the family of Kolmogorov flows. In particular, they proved that
$$\dim\mathcal{A} \geqslant cG^{2/3}.$$
In the recent paper, Ilyin and Zelik \cite{Il2021} develop the methods in \cite{Liu,Il2004'} to establish the lower bound depending the damped coefficient for the attractor's dimensions of the damped $2$D Euler-Bardina equation on $T^2$.

In the next, we will develop the method in \cite{Liu,Il2004',Il2021} to establish the lower bound for the attractor's dimension of the simplified Bardina equation on $T^2=[0; \, 2\pi]\times[0; \, 2\pi]$. Recall that the scalar vorticity form of the equation is
\begin{equation*}
(\varphi_t - \alpha^2 \Delta \varphi_t) - \nu\Delta(\varphi - \alpha^2 \Delta \varphi) + J(\Delta^{-1}\varphi,\varphi) = \curl_n f. 
\end{equation*}
Putting $\psi = \varphi -  \alpha^2\Delta\varphi$, then
\begin{equation}\label{LowerEq}
\psi_t - \nu\Delta\psi + J((\Delta - \alpha^2\Delta^2)^{-1}\psi,(I-\alpha^2\Delta)^{-1}\psi) = \curl_n f.
\end{equation}

We consider the following family of forces depending on the integer parameter $s$:
\begin{align*}
f=f_s =\begin{cases}
f_1 = \frac{1}{\sqrt{2}\pi}\nu^2\lambda s^2\sin s x_2,\\
f_2 =0,
\end{cases}
\end{align*}
where we choose the parameter $\lambda:=\lambda(s)$ later. Then, we have
$$|f| = \nu^2\lambda s^2, \, G= \lambda s^2$$
and
\begin{equation}\label{flow}
\curl_n f_s = F_s = -\frac{1}{\sqrt{2}\pi}\nu^2\lambda s^3\cos sx_2, \, |\curl_nf| = \nu^2\lambda s^3.
\end{equation}
Corresponding to the family \eqref{flow} is the family of stationary solutions
\begin{equation*}
\psi_s = -\frac{1}{\sqrt{2}\pi}\nu\lambda s\cos sx_2
\end{equation*}
of Equation \eqref{LowerEq} due to $\psi_s$ depends only on $x_2$, the nonlinear term vanishes
$$J((\Delta - \alpha^2\Delta^2)^{-1}\psi_s,(I-\alpha^2\Delta)^{-1}\psi_s)=0$$
and the equality $-\nu \Delta\psi_s = F_s$ is verified directly.

We linearize \eqref{LowerEq} about the stationary solution \eqref{flow} and consider the eigenvalue
problem
\begin{eqnarray}\label{EP}
\mathcal{L}_s\psi :&=& J((\Delta - \alpha^2\Delta^2)^{-1}\psi_s,(I-\alpha^2\Delta)^{-1}\psi) \cr
&&+ J((\Delta - \alpha^2\Delta^2)^{-1}\psi,(I-\alpha^2\Delta)^{-1}\psi_s) - \nu\Delta\psi = -\sigma\psi.
\end{eqnarray}
We use the orthonormal basis of trigonometric functions, which are the eigenfunctions of the Laplacian on the two-dimensional torus,
$$\left\{ \frac{1}{\sqrt{2}\pi}\sin kx, \frac{1}{\sqrt{2}\pi}\cos kx  \right\}, \, kx = k_1x_1+k_2x_2,$$
$$k \in \mathbb{Z}^2_+ = \left\{ k\in \mathbb{Z}_0^2| k_1\geq 0,\, k_2\geq 0 \right\}\cup \left\{k\in \mathbb{Z}_0^2| k_1\geq 1, k_2\leq 0  \right\}$$
and we rewrite $\psi$ as a Fourier series
$$\psi = \frac{1}{\sqrt{2}\pi}\sum_{k\in \mathbb{Z}_+^2} a_k \cos kx + b_k \sin kx.$$
Plugging this into \eqref{EP} and using the fact that $J(a,b)=-J(b,a)$ we obtain that
\begin{eqnarray}\label{EP1}
\frac{\lambda s}{\sqrt{2}\pi (s^2+ \alpha^2s^4)}&&\sum_{k\in \mathbb{Z}_+^2} \left( \frac{k^2-s^2}{k^2+\alpha^2k^4} \right)J(\cos sx_2, a_k \cos kx + b_k \sin kx)+\cr
&&+ \sum_{k\in \mathbb{Z}_+^2}(k^2+ \hat{\sigma})(a_k\cos kx + b_k \sin kx)=0,
\end{eqnarray}
where $\hat{\sigma}=\sigma/\nu$.

We can calculate that
\begin{eqnarray*}
J(\cos s x_2, \cos(k_1x_1+k_2x_2)) &=& -k_1s \sin sx_2 \sin(k_1x_1 + k_2x_2)\cr
&=& \frac{k_1 s}{2} (\cos (k_1x_1 +(k_2+s)x_2)) -\cos (k_1x_1+(k_2-s)x_2)
\end{eqnarray*}
and
\begin{eqnarray*}
J(\cos s x_2, \sin(k_1x_1+k_2x_2)) &=& k_1s \sin sx_2 \cos(k_1x_1 + k_2x_2)\cr
&=& \frac{k_1 s}{2} (\sin (k_1x_1 +(k_2+s)x_2)) - \sin (k_1x_1+(k_2-s)x_2).
\end{eqnarray*}
Substituting these equalities into \eqref{EP1} and regroup the terms with $\cos(k_1x_1+k_2x_2)$, we get the following equation for the coefficients $a_{k_1,k_2}$ 
\begin{eqnarray*}
&&-\Lambda(s)k_1 \left( \frac{k_1^2+(k_2+s)^2-s^2}{k_1^2+(k_2+s)^2 + \alpha^2(k_1^2 + (k_2+s)^2)^2} \right) a_{k_1k_2+s}\cr
&&+\Lambda(s)k_1 \left( \frac{k_1^2+(k_2-s)^2-s^2}{k_1^2+(k_2-s)^2 + \alpha^2(k_1^2 + (k_2-s)^2)^2} \right) a_{k_1k_2-s} + (k^2+\hat{\sigma})a_{k_1k_2} =0,
\end{eqnarray*}
where 
\begin{equation}
\Lambda = \Lambda(s):= \frac{s^2\lambda}{2\sqrt{2}\pi (s^2+ \alpha^2s^4)} = \frac{\lambda }{2\sqrt{2}\pi(1+\alpha^2 s^2)}.
\end{equation}
Similarly the equation for $b_{k_1,k_2}$ has also this form.

We put
$$a_{k_1k_2} \left( \frac{k^2-s^2}{k^2+ \alpha^2k^4} \right) =: c_{k_1k_2}.$$
and
$$k_1 = t, \, k_2= sn+r, \hbox{  and  } c_{t \, sn+r}= e_n,$$
$$t=1,2,..., \, r \in \mathbb{Z}, \, r_{\min} < r < r_{\max},$$
where the numbers $r_{\min}$ and $r_{\max}$ satisfy that $r_{\max} - r_{\min} <s$ and will be specified below we obtain for each $t$ and $r$ the following three term recurrence relation:
\begin{equation}\label{d1}
d_ne_n + e_{n-1} - e_{n+1} =0 , \, n=0,\pm 1,\pm 2,...,
\end{equation}
where
\begin{equation}\label{d2}
d_n = \frac{(t^2+(sn+r)^2+ \alpha^2(t^2+(sn+r)^2)^2)(t^2+(sn+r)^2+\hat{\sigma})}{\Lambda t(t^2+(sn+r)^2-s^2)}.
\end{equation}
We look for non-trivial decaying solutions $\left\{ e_n \right\}$ of \eqref{d1} and \eqref{d2}. Each nontrivial decaying solution with $\mathrm{Re}(\hat{\sigma})>0$ produces an unstable eigenfunction $\psi$ of the eigenvalue problem \eqref{EP}.  
\begin{theorem}
Given an integer $s>0$ let a pair of integers $t,\, r$ belong to a bounded region $A(\delta)$ given by
\begin{gather}\label{Con}
t^2+r^2<s^2/3, \, t^2+(-s+r)^2>s^2, \, t^2+(s+r)^2>s^2, \, t\geqslant \delta s,\cr
r_{\min}<r<r_{\max}, \, r_{\min} = -s/6, \, r_{\max}=s/6, \, 0<\delta<1/\sqrt{3}.
\end{gather}
For any $\Lambda>0$ there exists a unique real eigenvalue $\hat{\sigma} = \hat{\sigma}(\Lambda)$, which increases monotonically as $\Lambda\to \infty$ and satisfies the following inequality
\begin{equation}\label{Estimate}
c_1(\alpha,t,r,s)\Lambda < \hat{\sigma} < c_2(\alpha,t,r,s)\Lambda. 
\end{equation}
The unique $\Lambda_0 = \Lambda_0(s)$ solving the equation
$$\hat{\sigma}(\Lambda_0) = 0$$
satisfes the two-sided estimate
\begin{gather}\label{LU}
\frac{1}{\sqrt 2}\delta^2s(1+\alpha^2s^2) < \Lambda < \frac{55\sqrt 5}{63\sqrt 2}\frac{s(1+\alpha^2s^2)}{\delta^2} \hbox{    for    } \alpha \geqslant 0,\cr
\frac{1}{\sqrt 2}\delta^2s < \Lambda < \frac{5}{3\sqrt 3}\frac{s}{\delta^2} \hbox{    for    } \alpha =0.
\end{gather}
\end{theorem}
\begin{proof}
We observe that the following inequalities hold for any $(t,\, r)$ satisfying \eqref{Con}: 
\begin{eqnarray}
&&s^2\leqslant t^2+(-s+r)^2 = \mathrm{dist}((0,s),(t,r))^2 \leqslant\mathrm{dist}((0,s),C)^2 = (5/3) s^2\cr
&&s^2\leqslant t^2+(s+r)^2 = \mathrm{dist}((0,-s),(t,r))^2 \leqslant\mathrm{dist}((0,-s),B)^2 = (5/3) s^2,
\end{eqnarray}
where $B = (\sqrt{11}s/6,\, s/6)$ and $C= (\sqrt{11}s/6, \, -s/6)$.

In view of \eqref{Con} for any real $\hat{\sigma}$ satisfying $\hat{\sigma}>-t^2-r^2$ we have in the recurrence
relation \eqref{d1} and \eqref{d2} as
\begin{equation}\label{Con1}
d_n >0 \hbox{   for   } n \neq 0 \hbox{   and   } \lim_{|n|\to \infty}d_n = \infty.
\end{equation}
The main tool in the analysis of \eqref{d1} are continued fractions and a variant
of Pincherle's theorem saying that under condition \eqref{Con1} the recurrence relation \eqref{d1} has a decaying solution
$\left\{ e_n \right\}$ with $\lim_{|n|\to\infty}e_n = 0$ if and only if
\begin{equation}
-d_0 = \frac{1}{d_{-1}+\frac{1}{d_{-2}+...}}+ \frac{1}{d_{1}+\frac{1}{d_{2}+...}}.
\end{equation}
Now, we set
\begin{equation}\label{f}
f(\hat{\sigma}) = -d_0 = \frac{(t^2+r^2+ \alpha^2(t^2+r^2)^2)(t^2+r^2+\hat{\sigma})}{\Lambda t(s^2- t^2 - r^2)},
\end{equation}
\begin{equation}\label{g}
g(\hat{\sigma}) = \frac{1}{d_{-1}+\frac{1}{d_{-2}+...}}+ \frac{1}{d_{1}+\frac{1}{d_{2}+...}}.
\end{equation}
The equation \eqref{f} leads to
$$f(-t^2-r^2) = 0 \hbox{  and  } f(\hat\sigma)\to 0 \hbox{  as  } \hat{\sigma}\to \infty.$$
Combining \eqref{g} and \eqref{d2} we have 
$$g(\hat\sigma)< \frac{1}{d_{-1}} + \frac{1}{d_1} \hbox{  and  } g(\hat\sigma)\to 0 \hbox{  as  }\hat\sigma\to \infty.$$
Therefore, there exists a $\hat{\sigma}> -t^2-r^2$ such that
\begin{equation}\label{eq}
f(\hat\sigma) = g(\hat\sigma).
\end{equation}
From elementary properties of continued fractions we deduce as in \cite{Liu} that the $\hat{\sigma}$ so obtained is unique and increases monotonically with $\Lambda$.

To establish \eqref{Estimate} we deduce from \eqref{g} and \eqref{eq} that
\begin{equation}\label{two}
\frac{1}{d_{-1}+\frac{1}{d_{-2}}}+ \frac{1}{d_{1}+\frac{1}{d_{2}}} < f(\hat\sigma )< \frac{1}{d_{-1}}+ \frac{1}{d_{1}}.
\end{equation}
Using the conditions $t^2+(-s+r)^2>s^2$ and $t^2+(s+r)^2>s^2$, we deduce from \eqref{d2} that
\begin{eqnarray}
\frac{1}{d_{\pm 1}} &=& \frac{\Lambda t}{t^2+(s\pm r)^2+\hat{\sigma}}\frac{t^2+ (s\pm r)^2 - s^2}{t^2 + (s\pm r)^2 + \alpha^2(t^2 + (s\pm r)^2)^2} \cr
&\leqslant& \frac{\Lambda t}{s^2+\hat{\sigma}}\frac{1}{1+ \alpha^2(t^2+(s\pm r)^2)}\leqslant \frac{\Lambda t}{s^2+\hat{\sigma}}\frac{1}{1+ \alpha^2 s^2}.
\end{eqnarray}
Therefore, from the right-hand inequality in \eqref{two} it follows that
\begin{equation*}
f(\hat{\sigma}) = \frac{(t^2+r^2+ \alpha^2(t^2+r^2)^2)(t^2+r^2+\hat{\sigma})}{\Lambda t(s^2- t^2 - r^2)}< \frac{1}{d_{-1}} + \frac{1}{d_1} < \frac{2\Lambda t}{s^2+\hat{\sigma}}\frac{1}{1+ \alpha^2 s^2}.
\end{equation*}
Hence
\begin{eqnarray}\label{upper}
(t^2+r^2+\hat{\sigma})(s^2+\hat{\sigma}) &<& \frac{2\Lambda^2 t^2(s^2-(t^2+r^2))}{(t^2+r^2+\alpha^2(t^2+r^2)^2)(1+\alpha^2s^2)}\cr
&\leqslant& \frac{2\Lambda^2 t^2s^2}{(t^2+\alpha^2t^4)(1+\alpha^2s^2)} \leqslant \frac{2\Lambda^2\delta^{-2}s^2}{(1+\alpha^2s^2)^2},
\end{eqnarray}
which gives the right-hand side inequality in \eqref{Estimate}:
\begin{equation*}
\hat{\sigma} \leqslant c_2(\alpha,t,r,s)\Lambda \hbox{   as   } \Lambda\to \infty.
\end{equation*}
Morefuther, we set $\hat{\sigma}=0$ in \eqref{upper} and use the condition $t^2+r^2>\delta^2s^2, \, 0<\delta<1$ we get
$$\delta^2s^4 < \frac{2\Lambda^2\delta^{-2}s^2}{(1+\alpha^2s^2)^2}.$$
Therefore, we obtain the lower bound of $\Lambda$ as in the left-hand side of \eqref{LU}:
$$\frac{1}{\sqrt 2}\delta^2s(1+\alpha^2s^2)<\Lambda.$$ 

From the left-hand side inequality in \eqref{two}, where $d_{-1}, \, d_1, \, d_{-2},\, d_2, \, f>0$, we see that
\begin{equation}\label{fd0}
fd_1 + \frac{f}{d_2}>1 \hbox{   and   } fd_{-1} + \frac{f}{d_{-2}}>1.
\end{equation}
We have
\begin{eqnarray}\label{fd1}
fd_1 &=& \frac{(t^2+r^2+ \alpha^2(t^2+r^2)^2)(t^2+r^2+\hat{\sigma})}{\Lambda t(s^2- t^2 - r^2)}\cr
&&\times \frac{t^2+(s+ r)^2+\hat{\sigma}}{\Lambda t}\frac{t^2 + (s+ r)^2 + \alpha^2(t^2 + (s+ r)^2)^2}{t^2+ (s+ r)^2 - s^2} \cr
fd_{-1} &=& \frac{(t^2+r^2+ \alpha^2(t^2+r^2)^2)(t^2+r^2+\hat{\sigma})}{\Lambda t(s^2- t^2 - r^2)}\cr
&&\times \frac{t^2+(s- r)^2+\hat{\sigma}}{\Lambda t}\frac{t^2 + (s- r)^2 + \alpha^2(t^2 + (s- r)^2)^2}{t^2+ (s- r)^2 - s^2}
\end{eqnarray}
and
\begin{eqnarray}\label{fd2}
\frac{f}{d_2} &=& \frac{(t^2+r^2+\hat{\sigma})}{t^2+(2s+ r)^2+\hat{\sigma}}\frac{(t^2+r^2+ \alpha^2(t^2+r^2)^2)}{(s^2- t^2 - r^2)}\cr
&&\times \frac{t^2+ (2s+ r)^2 - s^2}{t^2 + (2s+ r)^2 + \alpha^2(t^2 + (2s+ r)^2)^2} \cr
\frac{f}{d_{-2}} &=& \frac{(t^2+r^2+\hat{\sigma})}{t^2+(2s- r)^2+\hat{\sigma}}\frac{(t^2+r^2+ \alpha^2(t^2+r^2)^2)}{(s^2- t^2 - r^2)}\cr
&&\times \frac{t^2+ (2s- r)^2 - s^2}{t^2 + (2s- r)^2 + \alpha^2(t^2 + (2s- r)^2)^2}
\end{eqnarray}
The first factors in \eqref{fd2} are clearly less than one. It follows from \eqref{Con} that $4sr\leqslant 2s^2/3$ and $|2s+r|\geqslant 11s/6$. Therefore, we can control the right-hand sides of \eqref{fd2} as 
\begin{eqnarray*}
\frac{f}{d_2} &=& \frac{(t^2+r^2+ \alpha^2(t^2+r^2)^2)}{(s^2- t^2 - r^2)} \frac{(t^2+ (2s+ r)^2 - s^2)}{(t^2 + (2s+ r)^2 + \alpha^2(t^2 + (2s+ r)^2)^2)} \cr
&<& \frac{(s^2/3+ \alpha^2s^4/9)4s^2}{2s^2/3((11/6)^2s^2+ \alpha^2(11/6)^4)s^4} = \frac{2(1+\alpha^2s^2/3)}{(11/6)^2+\alpha^2s^2(11/6)^4} \leqslant \frac{72}{121}.
\end{eqnarray*}
We would like to remark that if $\alpha=0$ we can improve this estimate by
\begin{eqnarray*}
\frac{f}{d_2} &=& \frac{(t^2+r^2)}{(s^2- t^2 - r^2)} \frac{(t^2+ (2s+ r)^2 - s^2)}{(t^2 + (2s+ r)^2} < \frac{t^2+r^2}{s^2-t^2-r^2} \cr
&<& \frac{s^2/3}{2s^2/3}=\frac{1}{2}.
\end{eqnarray*}
Along with \eqref{fd0} we have that $fd_1>49/121$ for $\alpha\geqslant 0$, which for $r\geq 0$ gives that
\begin{eqnarray}\label{U}
\frac{49}{121}&<& fd_1 = \frac{(t^2+r^2+ \alpha^2(t^2+r^2)^2)(t^2+r^2+\hat{\sigma})}{\Lambda t(s^2- t^2 - r^2)}\cr
&&\times \frac{t^2+(s+ r)^2+\hat{\sigma}}{\Lambda t}\frac{t^2 + (s+ r)^2 + \alpha^2(t^2 + (s+ r)^2)^2}{t^2+ (s+ r)^2 - s^2} \cr
&<& \frac{(t^2+r^2+\hat{\sigma})(t^2+(s+ r)^2+\hat{\sigma})}{\Lambda^2t^2}\frac{(s^2/3+\alpha^2s^4/9)(5s^2/3+\alpha^2s^425/9)}{(2/3)s^2t^2}\cr
&<&\frac{25}{18}\frac{(t^2+r^2+\hat{\sigma})(t^2+(s+ r)^2+\hat{\sigma})}{\Lambda^2}\frac{(1+\alpha^2s^2)^2}{\delta^4s^2} \hbox{    for    } \alpha\geqslant0.
\end{eqnarray}
For $\alpha=0$ we can improve this estimate as
\begin{equation}\label{U'}
\frac{1}{2} < \frac{5}{6}\frac{(t^2+r^2+\hat{\sigma})(t^2+(s+ r)^2+\hat{\sigma})}{\Lambda^2}\frac{1}{\delta^4s^2}.
\end{equation}

Therefore, we obtain the left-hand side inequality in \eqref{Estimate}:
$$c_1(\alpha,t,r,s) \leqslant \hat{\sigma}(\Lambda).$$
For $r<0$ we use $d_{-1}$ instead of $d_1$.

Now for $r\geqslant 0$, we set $\hat{\sigma} = 0$ in \eqref{U}  and use the inequalities $t^2+r^2< s^2/3, \, t^2+(s+r)^2<(5/3)s^2$ to obtain the upper bound of $\lambda$ as
\begin{eqnarray*}
\Lambda &\leqslant& \frac{55}{21\sqrt 2}\left( (s^2/3)(5s^2/3)\right)^{1/2}\frac{(1+\alpha^2s^2)}{\delta^2s}\cr
&=& \frac{55\sqrt 5}{63\sqrt 2}\frac{s(1+\alpha^2s^2)}{\delta^2} \hbox{    for   } \alpha \geqslant 0.
\end{eqnarray*}
For $\alpha=0$, using \eqref{U'} we obtain that
\begin{equation*}
\Lambda \leqslant \frac{\sqrt{10}}{\sqrt 6}\left( (s^2/3)(5s^2/3)\right)^{1/2}\frac{1}{\delta^2s}=\frac{5\sqrt{2}}{3\sqrt 6}\frac{s}{\delta^2}.
\end{equation*}
\end{proof}
Rewriting \eqref{LU} in the term of $\lambda(s)$ we see that for
$$\lambda_{\alpha\geq 0} = \frac{110\sqrt{5}\pi}{63}s\delta^{-2}(1+\alpha^2s^2)^2,$$
$$\lambda_{\alpha=0} = \frac{20\pi}{3\sqrt{6}}s\delta^{-2},$$
each point in $(t,\, r)$-plane satisfying \eqref{Con} produces an unstable (positive) eigenvalue
$\hat{\sigma}>0$ of multiplicity two (the equation for the coefficients $b_k$ is the same).
Denoting by $d(s)$ the number of points of the integer lattice inside the region $A(\delta)$ we obviously have
\begin{equation}
d(s):= \sharp \left\{ (t,r)\in D(s) = \mathbb{Z}^2\cap A(\delta) \right\} \simeq a(\delta)s^2 \hbox{   as   } s \to \infty,
\end{equation}
where $a(\delta)s^2=|A(\delta)|$ is the area of the region $A(\delta)$. Therefore
the dimension of the unstable manifold around the stationary solution $\psi_s$ is
at least $2a(\delta)s^2$ and we obtain that
\begin{equation}\label{ED}
\dim \mathcal{A} \geqslant 2d(s)\simeq 2a(\delta) s^2.
\end{equation}

It is reasonable to consider two case:\\
{\bf The case $\alpha=0$.}

We have 
$$G = \lambda_{\alpha=0} s^2 =  \frac{20\pi}{3\sqrt{6}}s^3\delta^{-2}$$
and writing the estimate \eqref{ED} in terms of the Grashof number $G$ we obtain
\begin{eqnarray*}
\mathrm{dim}\mathcal{A} &\geqslant& 2a(\delta) s^2 \simeq 2\left( \frac{3\sqrt{6}}{20\pi} \right)^{2/3} a(\delta)\delta^{4/3}G^{2/3}\cr
\mathrm{dim}\mathcal{A} &\geqslant& 2\left( \frac{3\sqrt{6}}{20\pi} \right)^{2/3} (\max_{0<\delta<1/\sqrt{3}} a(\delta)\delta^{4/3}) G^{2/3} = 0,006 G^{2/3},
\end{eqnarray*}
where $\max_{0<\delta<1/\sqrt{3}} a(\delta)\delta^{4/3} = 0,012$. This is exact the lower bound obtained for the Navier-Stokes equation (see \cite{Liu,Il2004'}).\\
{\bf The case $\alpha \ll 1$.}

Here we can obtain the following lower bound for $G \thicksim (1/\alpha)^3$. Let $0<s<1/\alpha$. Then $1+\alpha^2s^2<2$ and
$$G \leqslant \frac{440\sqrt{5}\pi}{63}s^3\delta^{-2}$$
and by the same way as above we obtain that
$$\mathrm{dim}\mathcal{A} \geqslant 2\left( \frac{63}{440\sqrt{5}\pi} \right)^{2/3}(\max_{0<\delta<1/\sqrt{3}} a(\delta)\delta^{4/3}) G^{2/3} = 0,0056 G^{2/3}.$$
In particular, setting $s\simeq 1/\alpha$ we can obtain in term of $\gamma$ that
$$C_1\frac{1}{\alpha^2} \leq \mathrm{dim}\mathcal{A} \leqslant C_2\frac{1}{\alpha^2}\left( \log \frac{1}{\alpha} \right)^{1/3}.$$
%{\bf The case $\alpha=\infty$.}
%Suppose the torus is $[0; \, 2\pi]\times[0; \, 2\pi]$. We first observe that for any fixed forcing $f$ and $\alpha\to\infty$ . the dynamics
%eventually becomes trivial: for $\alpha > \alpha_f$ all the solutions are attracted exponentially to the unique stationary solution. Therefore, to get non-trivial we replace the forcing $f$ by $\alpha^2f$ in the equation and let $\alpha\to\infty$ to
%reach the equation
%$$\frac{d}{dt}\Delta u - \nu\Delta(\Delta u) = \nabla p + f, \, \dive u = 0, \, u(0) =u_0. $$
%As well as the case $\alpha=\infty$ for the Navier-Stokes-$\alpha$ equation (see \cite{Il2004'}) we have
%$$\mathrm{dim}\mathcal{A} \leq 14,8 G^{2/3}.$$

\section{Inertial manifold}
The existence of an inertial manifold for the Navier-Stokes equations remains an open problem sofar. The principal reason is the nonlinear part of these equations that is very heavy to control. However, for the Bardina equations (or the orther turbulence equations such as modified-Leray-$\alpha$), one can overcome this difficulty due to the appearance of $\alpha$ which leads to control the nonlinear part of the equations. Actually, in the case of the simplified Bardina and modified-Leray-$\alpha$ equations with the periodic boundary conditions in two-dimension with periodic boundary conditions, Titi et al. \cite{Titi2014} proved the existence of inertial manifolds. Recently, the question is answered for modified-Leray-$\alpha$ equations in three-dimension by Kostiano \cite{Ko} and by Li and Sun \cite{Li}.

Beside, there are only two results about the existence of the inertial manifold on the curve spaces such as circle and two-dimensional sphere establised by Vukadinovic \cite{Vu2009a,Vu2009b} for the Smoluchowski equation. In this part, we prove the existence of an inertial manifold for the simplified Bardina model on the two-dimensional sphere $S^2$. 

We recall the definition of the inertial manifold. Consider an evolution equation on a Hilbert space $H$
endowed with the inner product $(.,.)$, and the norm $|.|$ of the form
\begin{equation}\label{abstractE}
u_t + Au = F(u).
\end{equation}
where $A$ is a positive self-adjoint linear operator with compact inverse, and $N: H \to H$ is a locally Lipschitz function.
Since $A^{-1}$ is compact, there exists a complete set of eigenfunctions $\omega_k$ for $A$,
$$A\omega_k = \lambda_k \omega_k, \, k=1,2,...$$
We arrange the eigenvalues of $A$ in a nondecreasing sequence $\lambda_1\leqslant \lambda_2\leqslant ...$ It is a well-known fact that $\lambda_k\to\infty$ as $k\to \infty$.
\begin{definition}(Inertial Manifold)
Assume that the abstract equation \eqref{abstractE} has a solution operator $S(t)$. An inertial manifold $\mathcal{M}$ is a finite-dimensional Lipschitz manifold which is positively invariant, i.e
$$S(t)\mathcal{M} \subset \mathcal{M}, \, t\geqslant 0.$$
and exponentially attracts all orbits of the flow uniformly on any bounded set $U \subset H$ of initial data, i.e
$$\mathrm{dist}(S(t)u_0,\mathcal{M}) \leqslant C_U e^{-\mu t}, \, u_0 \in U, \, t\geqslant 0.$$
\end{definition}
There are several methods for proving the existence of inertial manifolds. The vast
majority of them require some kind of Lipschitz continuity of the non-linearity $F$ and
make use of a very restrictive spectral gap property of the linear operator $A$.
\begin{theorem}\label{IMT}
Consider the abstract equation \eqref{abstractE} we assume that the non-linearity $F$ is globally Lipschitz with Lipschitz constant $L$ and the the spectral gap condition $\lambda_{n+1} - \lambda_n > 2L$ is satisfied for some $n$. Then there exists an $n$-dimensional inertial manifold over the base spanned by first $n$ eigenvectors.
\end{theorem}

On the $2$-sphere $S^2$ we have the Hodge decomposition $C^\infty(TS^2) = \left\{ \nabla\psi: \psi \in C^\infty(S^2) \right\} \oplus \left\{\curl\psi: \psi \in C^\infty(S^2) \right\}$. Therefore, by using the Helmholtz-Leray projection the simplified Bardina equation takes the form
\begin{equation}\label{BarEquOnS2}
v_t + \nu Av + B(u,u) =f,
\end{equation}
where $A= \curl\curl_n$ and $B(u,u)=-\mathbb{P}(u\times \curl_nu)$. We notice that we do not need to add the dissipative term to the equation since $\mathcal{H}^1 = \left\{ \vec{0} \right\}$. 

The $H^1$- and $H^2$-estimates are more simpler than the ones in the generalized $2$-dimensional closed manifolds obtained in Section 3.1. Indeed, we take the scalar product in $L^2(TS^2)$ of Equation \eqref{BarEquOnS2} and $u$:
$$\frac{1}{2}\frac{d}{dt} (|u|^2 + \alpha^2 \left\| u \right\|^2) + \nu (\left\| u \right\|^2 + \alpha^2| Au |^2) \leq |\left<f,u\right>|.$$
By Cauchy-Schwarz inequality, we have
\begin{equation*}
|\left< f,u \right>| \leq |A^{-1}f||Au|,
\end{equation*}
and by Young's inequality we have
\begin{equation*}
|\left< f,u\right>| \leq \frac{|A^{-1}f|^2}{2\nu\alpha^2} + \frac{\nu}{2}\alpha^2|Au|^2.
\end{equation*}
Therefore
\begin{equation*}\label{iinee1}
\frac{d}{dt}(|u|^2 + \alpha^2\left\| u \right\|^2) + \nu(\left\|u \right\|^2 + \alpha^2| Au|^2) \leq \frac{|A^{-1}f|^2}{\nu\alpha^2}.
\end{equation*}
Using Poincar\'e's and Gronwall's inequalities we obtain the $H^1$-estimate as follows
\begin{equation}\label{aabsoo1}
|u(t)|^2 + \alpha^2\left\| u(t) \right\|^2 \leq e^{-\nu\lambda_1 t}(|u_{0}|^2 + \alpha^2\left\|u_{0} \right\|^2) + \frac{|A^{-1}f|^2}{\nu^2\alpha^2\lambda_1}(1-e^{-\nu\lambda_1 t}).
\end{equation}
Taking now the inner product on $L^2(TS^2)$ of Equation \eqref{BarEquOnS2} with $Au$ with noting that $\left<B(u,u),Au\right>=0$, we get 
\begin{equation*}
\frac{1}{2}\frac{d}{dt}(\left\| u \right\|^2 + \alpha^2|Au|^2) + \nu (|Au|^2 + \alpha^2 |A^{3/2}u|^2) \leq |\left< f,Au\right>|.
\end{equation*}
Observe that by Cauchy-Schwarz and Young inequalities
\begin{equation*}
|\left< f,Au\right>| \leq |A^{-1/2}f||A^{3/2}u| \leq \frac{|A^{-1/2}f|^2}{2\alpha^2\nu} + \frac{\alpha^2\nu}{2}|A^{3/2}u|^2.
\end{equation*}
Therefore we have
\begin{equation*}
\frac{d}{dt}\left(\left\|u \right\|^2 + \alpha^2|Au|^2\right) + \nu \left( |Au|^2 + \alpha^2|A^{3/2}u|^2 \right) \leq \frac{|A^{-1/2}f|^2}{\alpha^2\nu}.
\end{equation*}
By using Poincar\'e's and Gronwall's inequalities  we obtain the $H^2$-estimate as follows
\begin{equation}\label{aabsoo2}
\left\| u(t) \right\|^2 + \alpha^2|Au(t)|^2 \leq e^{- \nu\lambda_1 t}(\left\| u(0) \right\|^2 + \alpha^2|Au(0)|^2) + \frac{|A^{-1/2}f|^2}{\nu^2\alpha^2\lambda_1}(1-e^{- \nu\lambda_1 t}).
\end{equation}

The $H^1$-estimates \eqref{aabsoo1} leads to
\begin{equation*}
\lim_{t\rightarrow\infty}|u(t)| \leq \frac{1}{2}\rho_0:= [(1+\alpha^2\lambda_1)\nu^2\alpha^2\lambda_1)]^{-1/2}|A^{-1}f|,
\end{equation*}
\begin{equation*}
\lim_{t\rightarrow\infty}\left\| u(t)\right\| \leq \frac{1}{2}\rho_1:= (\nu^2\alpha^4\lambda_1)^{-1/2}|A^{-1}f|.
\end{equation*}
Therefore, the solution $u(t)$, after long enough time, enters a ball in $H$, centered at the origin, with radius $\rho_0$. Also, $u(t)$ enters a ball in $V$ with radius $\rho_1$.

The $H^2$-estimates \eqref{aabsoo2} leads to
\begin{equation*}
\lim_{t\rightarrow\infty}\sup\left\| u(t)\right\| \leq \frac{1}{2}\tilde{\rho}_1:=[(1+\alpha^2\lambda_1)\nu^2\alpha^2\lambda_1]^{-1/2}|A^{-1/2}f|,
\end{equation*}
\begin{equation*}
\lim_{t\rightarrow\infty}\sup |Au(t)| \leq \frac{1}{2}\rho_2:= (\nu^2\alpha^4\lambda_1)^{-1/2}|A^{-1/2}f|.
\end{equation*}
We deduce that $\left\| u(t)\right\| \leq \min\left\{ \rho_1,\tilde{\rho}_1\right\}$ for $t$ large enough. Also $u(t)$ enters in the ball with radius $\rho_2$ in $D(A)$ after long enough time. 

Since $v=u+\alpha^2Au$, we have
\begin{equation*}
\lim_{t\rightarrow\infty}\sup|v(t)|\leq \lim_{t\rightarrow\infty}\sup|u(t)|+\alpha^2|Au(t)|\leq \frac{\rho_0+\alpha^2\rho_2}{2}.
\end{equation*}
Then after large time, $v(t)$ enters a ball in $H$ of the radius $\rho = \rho_0 + \alpha^2\rho_2$. Note that $\rho_0,\rho_1,\tilde{\rho}_1, \rho_2$ and $\rho$ are equivalent to $\nu^{-1}$ asymptotically.

Denoting $F(v)= -B((I+\alpha^2A)^{-1}v, (I + \alpha^2A)^{-1}v) + f = -B(u,u)+f$, then the Bardina equation \eqref{BarEquOnS2} takes the form
\begin{equation}\label{Bardina}
\frac{d}{dt}v + \nu Av = F(u) \in V', \, v(0)=v_0.
\end{equation}
The above estimates yield $u(t)\in D(A)$ hence $v(t)\in H$ for $t>0$. Since we are considering the large-time behavior of solutions, without loss of generality we can assume $v_0 \in H$. Let $v_1, v_2\in H$ then $u_1,u_2 \in D(A)$, and we have
\begin{equation*}
|Au|= |A(I+\alpha^2A)^{-1}v| \leq \frac{1}{\alpha^2}|v| \hbox{  for  } \alpha>0.
\end{equation*}
Using H\"older's inequality and Ladyzhenskaya’s inequality: $\norm{\phi}_{L^4} \leq c\norm{\phi}_{L^2}^{1/2}\norm{\nabla \phi}_{L^2}^{1/2}$, the non-linear part $B(u,v)$ can be estimated as
$$|B(u,v)| \leq c|u|^{1/2}\norm{u}^{1/2}\norm{v}^{1/2}|Av|^{1/2}.$$
Now using this estimate of $B(u,v)$ and Poincar\'e inequality, we establish that
\begin{eqnarray*}
|F(v_1)-F(v_2)|&=& |B(u_1,u_1)-B(u_2,u_2)| \cr
&=& |B(u_1,u_1-u_2)+B(u_1-u_2,u_2)| \cr
&\leq& c|u_1|^{1/2} \left\| u_1\right\|^{1/2} \left\| u_1-u_2\right\|^{1/2}|Au_1-Au_2|^{1/2} \cr
&&+ c|u_1-u_2|^{1/2}\left\| u_1-u_2\right\|^{1/2} \left\| u_2\right\|^{1/2}|Au_2|^{1/2}\cr
&\leq& c\lambda_1^{-1}(|Au_1|+|Au_2|)|Au_1-Au_2| \cr
&\leq& c\lambda_1^{-1}\alpha^{-4}(|v_1|+|v_2|)|v_1-v_2|.
\end{eqnarray*}
This yields that the nonlinear operator $F$ is locally Lipschitz from $H$ to $H$, i.e, for $v_1,\, v_2 $ in a small ball $B_\rho$ of $H$,
$$|F(v_1)-F(v_2)| \leqslant L|v_1-v_2|,$$
where $L= 2c\rho\lambda_1^{-1}\alpha^{-4}$.

We construct a prepared equation of \eqref{Bardina} as follows: let $\theta:\mathbb{R}^+ \longrightarrow [0,1]$ with $\theta(s)=1$ for $0\leq s \leq 1$, $\theta(s)=0$ for $s\geq 2$ and $\theta'(s)>2$ for $s\geq 0$. We define $\theta_\rho(s)=\theta(s/\rho)$ for $s\geq 0$ and the prepared equation of \eqref{Bardina} is given by
\begin{equation}\label{PreEq}
\frac{dv}{dt} + \nu Av = \theta_\rho(|v|)(F(v)+f):=\mathcal{F}(v).
\end{equation}

For $t$ sufficiently large, $v(t)$ enters a ball in $H$ with radius $\rho$, this leads to the fact that Equations \eqref{Bardina} and \eqref{PreEq} have the same asymptotic behaviors in time, and the same dynamics in the neighborhood of the global attractor. Furthermore \eqref{PreEq} has also an absorbing invariant ball in $H$. Indeed take the scalar product of \eqref{PreEq} with $v$, then for $|v|\geqslant 2\rho$ we have 
\begin{equation*}
\frac{d}{dt}|v|^2 + 2\nu\lambda_1|v|^2 \leq \frac{d}{dt}|v|^2 + 2\nu \left\| v\right\|^2 = 0 \hbox{  for all  } t\geqslant 0,
\end{equation*}
since $\theta_\rho(|v|) = 0$ for $|v|\geqslant 2\rho$. It follows that, if $|v_0| > 2\rho$, the orbit of the solution to \eqref{PreEq} will converge exponentially to the ball of radius $2\rho$ in $H$, while if $|v_0| \leq 2\rho$, the solution does not leave this ball. 
\begin{theorem}
The prepared equation \eqref{PreEq} of the simplified Bardina equation has an $n$-dimensional inertial manifold $\mathcal{M}$ in $H$. Furthermore, the inertial manifold $\mathcal{M}$ has the exponential tracking
property (so called normally hyperbolic inertial manifold), i.e: for any $v_0 \in H$, there exists $\phi_0 \in \mathcal{M}$ such that
$$|S(t)v_0 - S(t)\phi_0| \leq Ce^{-\mu_nt},$$
where $\mu_n \geqslant \dfrac{\lambda_{n+1}+\lambda_n}{2}$ for some $n$ and the constant $C$ depends on $|v_0|$ and $|\phi_0|$.
\end{theorem}
\begin{proof}
 The function $\mathcal{F}(v)= \theta_\rho(|v|)(F(v) +f)$ is globally Lipschitz from $H$ to $H$ due to that $F$ is locally Lipschitz and
$$|\mathcal{F}(v_1)-\mathcal{F}(v_2)| \leqslant L|v_1-v_2|, \hbox{  where  } L= 2c\rho\lambda_1^{-1}\alpha^{-4}.$$

On the $2-$sphere $S^2$, the eigenvalues of $A=\curl\curl_n$ can be calculated explicitly as $\lambda_n= n(n+1)$. Therefore we have the distance of the two successive eigenvalues
\begin{equation*}
\lim_{n\rightarrow +\infty}(\lambda_{n+1}-\lambda_n)= \lim_{n\rightarrow +\infty}[(n+1)(n+2)-n(n+1)]= \lim_{n\rightarrow +\infty}2(n+1)=  +\infty.
\end{equation*}
Hence, there exists $n$ large enough such that $\lambda_{n+1}-\lambda_n>2L$.

The non-linearity $\mathcal{F}$ of the prepared equation \eqref{PreEq} is globally Lipschitz and the operator $A=\curl\curl_n$ on $S^2$ satisfies the spectral gap condition. By applying Theorem \ref{IMT} we obtain the existence of the inertial manifold $\mathcal{M}$ for \eqref{PreEq}.

The exponential tracking property of $\mathcal{M}$ holds by using Theorem 5.2 in \cite{Fo1989} and we can show that the number $\mu_n \geqslant \dfrac{\lambda_{n+1}+\lambda_n}{2}$ from the formula of $\mu_n$ given in Theorem 4.1 in \cite{Fo1989}.
\end{proof}

\section*{Appendix}
We prove the estimate \eqref{INE} of $\norm{\rho}_\infty^{1/2}$. Indeed, we known that for $\theta \in H^3(S^2)$ and for any integer $k\geq 0$ the following inequality holds (see Lemma 4.3 in \cite{Il1994}).
\begin{equation}\label{basicE}
2\sqrt{\pi}\norm{\nabla\theta}_\infty \leqslant |\Delta\theta|(2\log(k+1)+1)^{1/2} + (k+1)^{-1}(2\lambda_1^{-1})^{1/2}|\nabla\Delta\theta|,
\end{equation}
where $|.|$ denotes the norm in $L^2$.

Let $\xi_1,...,\xi_N \in \mathbb{R}$ such that $\sum_{i=1}^N\xi^2_i =1$. We have
$$\sum_{i=1}^N \xi_iv_i = n \times \nabla\left( \Delta^{-1}\sum_{i=1}^N\xi_i\theta_i \right).$$
Using the inequality \eqref{basicE} we get that
\begin{eqnarray*}
2\sqrt{\pi}\left|\sum_{i=1}^N\xi_iv_i(s) \right| &\leqslant& 2\sqrt{\pi}\norm{\nabla\left( \Delta^{-1}\sum_{i=1}^N\xi_i\theta_i \right)}_\infty\cr
&\leqslant& \left|\sum_{i=1}^N\xi_i\theta_i\right|(2\log(k+1)+1)^{1/2} \cr
&&+ (k+1)^{-1}(2\lambda_1^{-1})^{1/2}\left|\sum_{i=1}^N\xi_i\nabla\theta_i\right|.
\end{eqnarray*}
Since $\left\{\theta_i \right\}_{i=1}^N$ are orthonormal in $\mathbb{H}$ with the norm $\norm{.}_\alpha$, we have
$$\left|\sum_{i=1}^N\xi_i\theta_i\right|^2 \leqslant \frac{1}{1+\alpha^2\lambda_1}\norm{\sum_{i=1}^N\xi_i\theta_i}^2_{\alpha} = \frac{1}{1+\alpha^2\lambda_1}\sum_{i=1}^N\xi_i^2 = \frac{1}{1+\alpha^2\lambda_1}.$$
Using the Cauchy inequaIity for the second term we obtain
\begin{eqnarray*}
\left|\sum_{i=1}^N\xi_iv_i(s) \right|^2 &=&  \left(\sum_{i=1}^N\xi_iv^1_i(s) \right)^2 + \left(\sum_{i=1}^N\xi_iv^2_i(s) \right)^2  \cr
&\leqslant& \frac{(2\sqrt{\pi})^{-2}}{1+\alpha^2\lambda_1} \left( (2\log(k+1)+1) + (k+1)^{-1}\sqrt{2} \left( \lambda_1^{-1}\sum_{i=1}^N |\nabla\theta_i|^2 \right)^{1/2} \right)^2:=c^2,
\end{eqnarray*}
where $v_i = v_i^1 + v_i^2$ is some orthogonal decomposition of $v_i(s)$ at a point $s$. 

By substituting
$$\xi_i = \frac{v_i^1}{\left(\sum_{i=1}^N(v_i^1)^2\right)^{1/2}}$$
and then 
$$\xi_i = \frac{v_i^2}{\left(\sum_{i=1}^N(v_i^2)^2\right)^{1/2}}$$
in the above inequality, we therefore obtain that
$$\rho(s)= \sum_{i=1}^N |v_i(s)|^2 \leqslant 2c^2$$
and the inequality \eqref{INE} holds.

On the generalized two dimensional closed manifold $M$ we have that for $v \in H^2(TM)$ and for any integer $k\geq 0$ the following inequality holds (see \cite{Br}).
\begin{equation*}
\norm{v}_\infty \leqslant l \left( \norm{v}_{H^1} (\log((k+1)^2+1))^{1/2} + (k+1)^{-1} \norm{v}_{H^2} \right),
\end{equation*}
Putting $v = \curl \theta$, where $\theta\in H^3(M)$, then
\begin{equation*}
\norm{\nabla\theta}_\infty \leqslant L \left( |\Delta\theta|(2\log(k+1)+1)^{1/2} + (k+1)^{-1}|\nabla\Delta\theta| \right),
\end{equation*}
By the same way as above we can obtain the inequality \eqref{INE'} as well as \eqref{INE} as follows
\begin{equation*}
\sqrt{(1+\alpha^2\lambda_1)}\norm{\rho}_\infty^{1/2} \leqslant L \left( (2\log(k+1)+1)^{1/2} + \sqrt{\lambda_1}(k+1)^{-1}\left( \lambda_1^{-1}\sum_{i=1}^N|\nabla\theta_i|^2 \right)^{1/2} \right).
\end{equation*}

\end{document}